\newcommand{\blind}[2]{#1}
\global\long\def\R{\mathbb{R}}
\newcommand{\recomb}[2]{#2}
\newcommand{\dhorline}[3][0]{%
    \tikz[baseline]{\path[decoration={markings,
      mark=between positions 0 and 1 step 4*#3
      with {\node[fill, circle, minimum width=#3, inner sep=0pt, anchor=south west] {};}},postaction={decorate}]  (0,#1) -- ++(#2,0);}}
\newcommand{\dottedhfill}{\noindent\makebox{\dhorline{0.93\linewidth}{0.8pt}}}
\global\long\def\BC{Barber and Cand\`es}
\global\long\def\KR{Katsevich and Ramdas}
\global\long\def\subKR{_{\text{KR}}}
\global\long\def\til#1{\tilde{#1}}%
\global\long\def\alp{\alpha}
\global\long\def\gam{\gamma}
\global\long\def\del{\delta}
\global\long\def\eps{\varepsilon}%
\global\long\def\sig{\sigma}%
\global\long\def\betaa{\boldsymbol{\beta}}%
\global\long\def\epss{\boldsymbol{\eps}}%
\global\long\def\yy{\boldsymbol{y}}%
\global\long\def\N{\mathbb{N}}%
\global\long\def\R{\mathbb{R}}%
\global\long\def\supsec{Supplementary Section}%
\global\long\def\supfig{Supplementary Figure}%
\theoremstyle{plain}
\newtheorem{Theorem}{Theorem}
\newtheorem{Assumption}{Assumption}
\newtheorem{Lemma}{Lemma}
\newtheorem{Corollary}[Lemma]{Corollary}
\theoremstyle{remark}
\newtheorem{Remark}{Remark}
\newcommand{\eqq}{\coloneqq}
\newenvironment{alglist}
  {\begin{list}
    {}
    {\setlength{\labelwidth}{1em}
     \setlength{\labelsep}{0em}
     \setlength{\itemsep}{2pt}
     \setlength{\leftmargin}{1cm}
     \setlength{\rightmargin}{1.2cm}
     \setlength{\itemindent}{0em} 
     
    }
  }
{\end{list}}
\newcounter{alglistcounter}
\begin{document}

\title{Competition-based control of the false discovery proportion}

\blind{\author{Dong Luo$^1$, Arya Ebadi$^1$, Kristen Emery$^1$, Yilun He$^1$,\\William Stafford Noble$^2$, Uri Keich$^1$\\
$^1$School of Mathematics and Statistics F07\\
University of Sydney\\
$^2$Departments of Genome Sciences and of Computer Science and Engineering\\
University of Washington\\
}}{}

\maketitle

\begin{abstract}
Recently, \BC\ laid the theoretical foundation for a general framework for false discovery rate (FDR) control based on the notion of ``knockoffs.''
A closely related FDR control methodology has long been employed in the analysis of mass spectrometry data, referred to there as ``target-decoy competition'' (TDC).
However, any approach that aims to control the FDR, which is defined as the expected value of the false discovery proportion (FDP),
suffers from a problem. Specifically, even when successfully controlling the FDR at level $\alpha$,
the FDP in the list of discoveries can significantly exceed $\alpha$.
We offer FDP-SD, a new procedure that rigorously controls the FDP in the competition (knockoff / TDC) setup
by guaranteeing that the FDP is bounded by $\alpha$ at any desired confidence level.
Compared with the just-published general framework of \KR, FDP-SD generally delivers more power and often substantially so
in simulated as well as real data.
\end{abstract}

% \author{Dong Luo, Yilun He, Kristen Emery\\
% School of Mathematics and Statistics F07\\
% University of Sydney\\
% dluo7139@uni.sydney.edu.au, yihe8397@uni.sydney.edu.au, keme6477@uni.sydney.edu.au
% \and
% William Stafford Noble\\
% Department of Genome Sciences\\
% Department of Computer Science and Engineering\\
% University of Washington\\
% william-noble@uw.edu
% \and
% Uri Keich\\
% School of Mathematics and Statistics F07\\
% University of Sydney\\
% uri.keich@sydney.edu.au
% }
%
% \maketitle
% \thispagestyle{empty}

% \author{Dong Luo, Yilun He, Kristen Emery, William Stafford Noble and Uri Keich}
% \date{2020}

	% \keywords{false discovery proportion (FDP) \and target-decoy competition (TDC) \and multiple hypothesis testing  \and peptide detection \and tandem mass spectrometry \and .}
	% False Discovery Proportion (FDP)
	% Target-Decoy Competition (TDC)
	% Knockoffs
	% Spectrum-Identification
	% Peptide Detection
	% Tandem Mass Spectrometry

\noindent
{\sc Keywords:} False discovery proportion (FDP), Target-decoy competition (TDC), Knockoffs, Spectrum identification, Variable selection.

\section{Introduction}

Competition-based false discovery rate (FDR) control has been widely practiced by the computational mass spectrometry community since it was
first proposed by Elias and Gygi \cite{elias:target,cerqueira:mude,jeong:false,elias:target2,granholm:determining,the:talk}.
Consider for example the spectrum identification (spectrum-ID) problem where our goal is to assign for each of the, typically,
tens of thousands of spectra the peptide that has most likely generated it (\supsec \ref{sec:background} provides further context).

Spectrum-ID is typically initiated by scanning each input spectrum against a peptide database for its
best matching peptide.
Pioneered by SEQUEST~\cite{eng:approach}, the search engine uses an elaborate score function to quantify the quality of the match between
each of the database peptides and the observed spectrum, recording the optimal peptide-spectrum match (PSM) for the
given spectrum along with its score $Z_i$~\cite{nesvizhskii:survey}.
In practice, many expected fragment ions will fail to be observed for any given spectrum, and the spectrum is also likely to contain a
variety of additional, unexplained peaks~\cite{noble:computational}.
Hence, sometimes the reported PSM is correct --- the peptide assigned to the spectrum was
present in the mass spectrometer when the spectrum was generated --- and sometimes the PSM is incorrect.
Ideally, we would report only the correct PSMs, but obviously we do not know which PSMs are correct and which are incorrect;
all we have is the score of the PSM, indicating its quality.
Therefore, we report a thresholded list of top-scoring PSMs while trying to control the list's FDR using
target-decoy competition (TDC), as explained next.

First, the same search engine is used to assign each input spectrum a \emph{decoy}
PSM score, $\til Z_{i}$, by searching for the spectrum's best match in a decoy
database of peptides obtained from the original ({\em target}) database by randomly
shuffling or reversing each peptide in the database.
Each decoy score $\til Z_{i}$ then directly competes with its corresponding
target score $Z_{i}$ to determine the reported list of discoveries, i.e.,
we only report target PSMs that win their competition: $Z_{i}>\til Z_{i}$.
Additionally, the number of decoy wins ($\til Z_{i}>Z_{i}$) in the top $k$ scoring PSMs is used to
estimate the number of false discoveries in the target wins among the same top $k$ PSMs.
Thus, the ratio between the number of decoy wins and the number of target wins
yields an estimate of the FDR among the target wins in the top $k$ PSMs. 
To control the FDR at level $\alp$, the TDC procedure (Supplementary Section \ref{supsec:algs})
chooses the largest $k$ for which the estimated FDR is still $\le\alpha$,
and it reports all target wins among those top $k$ PSMs.
It was recently shown that, assuming that \emph{incorrect PSMs are independently equally likely to come
from a target or a decoy match}, and provided we add 1 to
the number of decoy wins before dividing by the number of target wins, this procedure
rigorously controls the FDR~\cite{he:theoretical}.

More recently, \BC\ used the same principle in their knockoff+ procedure to control the FDR in feature selection in a classical linear
regression model~\cite{barber:controlling}\recomb{.}{:\begin{equation}
\yy=X\betaa+\epss,\label{eq:lin_model}
\end{equation}
where $\yy\in\R^{n}$ is the response vector, $X$ is the $n\times p$
known, real-valued design matrix, $\betaa\in\R^{p}$ is the unknown
vector of coefficients, and $\epss\sim N(0,\sig^{2}I)$ is Gaussian
noise. 
Briefly, knockoff+ relies on introducing an $n\times p$ {\em knockoff}
design matrix $\til X$, where each column consists of a knockoff
copy of the corresponding original variable. These knockoff variables
are constructed so that in terms of the underlying regression problem
the true null features (the ones that are not included in the model)
are in some sense indistinguishable from their knockoff copies. The
procedure then assigns to each null hypothesis $H_{i}:\beta_{i}=0$
two test statistics $Z_{i},\til Z_{i}$ which correspond to the point
$\lambda$ on the Lasso path \cite{tibshirani:regression}
at which feature $X_{i}$, respectively, its knockoff competition $\tilde{X}_{i}$, first enters the
model. The intuition here is that generally $Z_{i}>\til Z_{i}$ for
true model features, whereas for null features, $Z_{i}$ and $\til Z_{i}$
are identically distributed.

}
While \BC' knockoff construction is significantly more elaborate than
that of the analogous decoys in the spectrum-ID problem, knockoffs and decoys serve the same purpose in competition-based FDR control.
Moreover, following their work and the introduction of a more flexible formulation of the variable selection problem
in the model-X framework of Cand\'es et al.~\cite{candes:panning}, competition-based FDR control
has gained a lot of interest in the statistical and machine learning communities, where it
has been applied to various applications in biomedical research~\cite{xiao:mapping,gao:model,read:predicting} and has been extended
to work in conjunction with deep neural networks~\cite{lu:deeppink} and time series data~\cite{fan:ipad},
as well as to work in a likelihood setting without requiring the use of latent variables~\cite{sudarshan:deep}.

FDR control is a popular approach to the analysis of multiple testing.
However, it should not be
confused with controlling the false discovery proportion (FDP). The latter is the proportion of true nulls
among all the rejected hypotheses (discoveries), and the FDR is its expectation\recomb{.}{ (taken with respect to the true null hypotheses).}
In particular, while controlling the FDR at level $\alpha$, the FDP in any given sample can exceed $\alpha$.
Thus, in practice, controlling the FDP is arguably more relevant than the FDR in most cases.
Figure \ref{fig:intro} provides examples from both the mass spectrometry (left) and feature selection from linear regression (right)
domains showing that while the FDR is controlled (left: $\alp=0.05 > 0.047=\overline{\text{FDP}}$ , right: $\alp=0.2 > 0.198=\overline{\text{FDP}}$)
the FDP can significantly exceed $\alp$.

% \begin{SCfigure}
%     \includegraphics[width=3in]{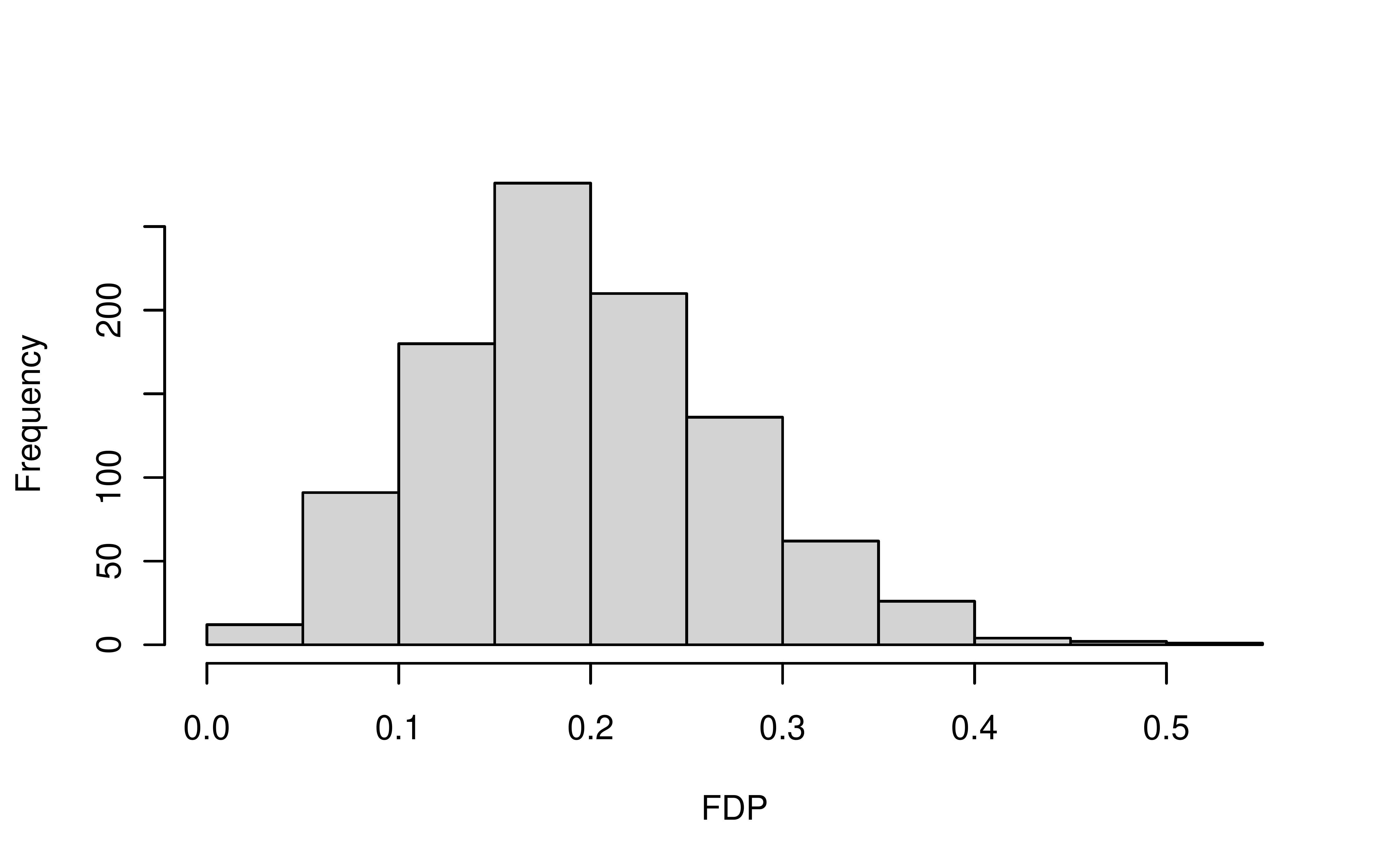}
% \caption{\textbf{Controlling the FDR does not imply controlling the FDP.}
% The histogram was generated by 1000 applications of the model-X knockoff procedure to select the relevant features in that many linear
% regression models while controlling the FDR at $\alp=0.2$ and noting the actual FDP among the selected features. Note that
% $\overline{\text{FDP}}=0.195$, which is indeed $<\alp=0.2$ (see Section \ref{sec:regression} for details).\label{fig:intro}}
% \end{SCfigure}

\begin{figure}
\centering %
\begin{tabular}{ll}
Peptide detection  & Feature selection \tabularnewline
\includegraphics[width=3in]{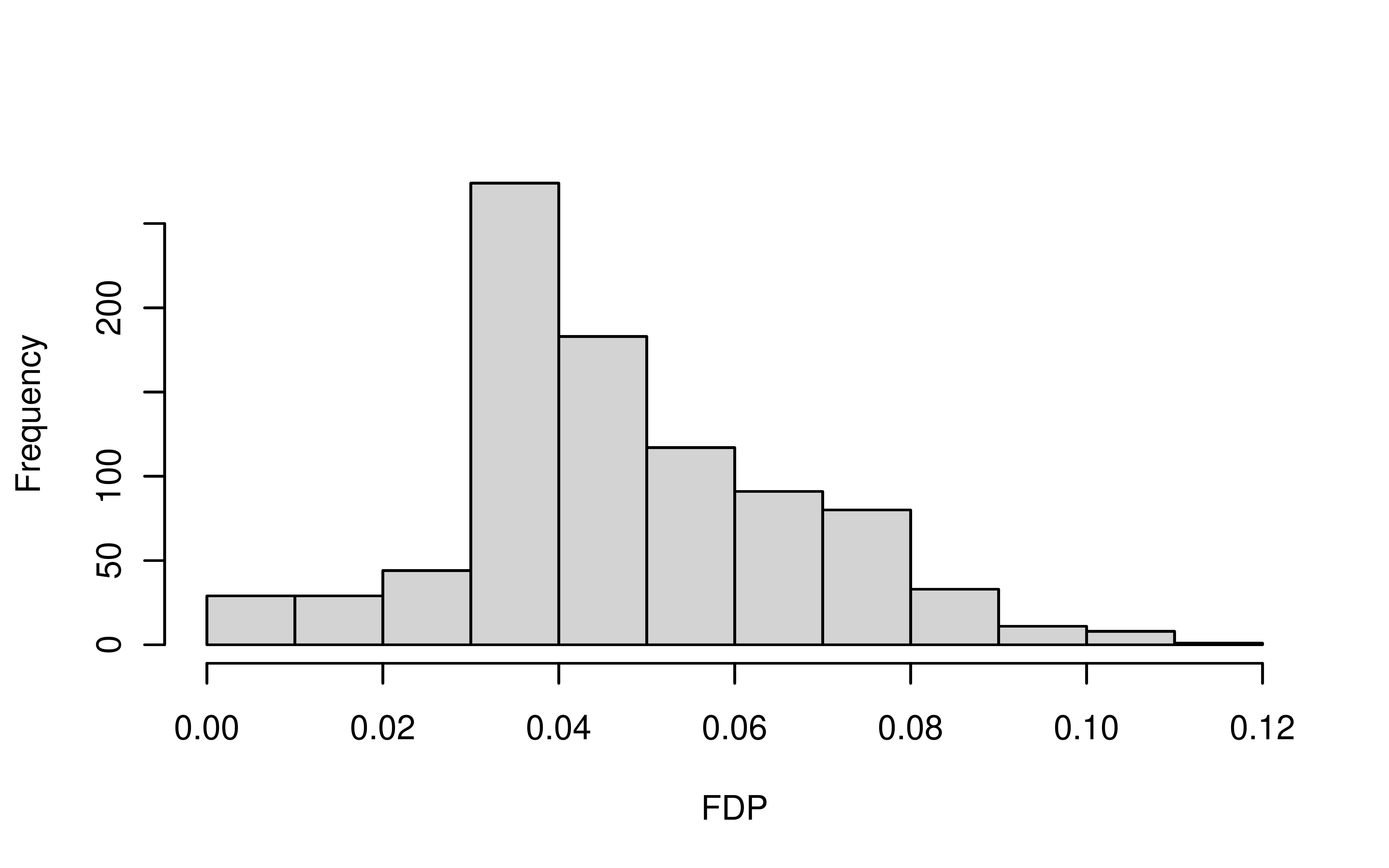}  & \includegraphics[width=3in]{} \tabularnewline
\end{tabular}
\caption{\textbf{Controlling the FDR does not imply controlling the FDP.} (A) We applied TDC to control the FDR (at $\alp=0.05$)
among the peptides detected in runs of the ISB18 data (see Section \ref{section_applications} for details).  Because ISB18 comes from a controlled experiment
we inferred the FDP for each of those 900 TDC runs and generated the presented histogram. Note that the average of the FDP,
$\overline{\text{FDP}}=0.047$, is indeed $<\alp=0.05$.
(B) The histogram was generated by 1000 applications of the model-X knockoff procedure to select the relevant features in that many linear
regression models while controlling the FDR at $\alp=0.2$ and noting the actual FDP among the selected features (see Section \ref{section_applications} for details). Note that
$\overline{\text{FDP}}=0.198$, is indeed $<\alp=0.2$.\label{fig:intro}}
\end{figure}

When introducing the notion of FDR, Benjamini and Hochberg noted that, strictly speaking,
the FDP cannot be controlled at any non-trivial level~\cite{benjamini:controlling}. Indeed, imagine that all the hypotheses
are true nulls: rejecting any hypothesis would then imply the FDP is 1.
Nevertheless, some notion of controlling the FDP, or false discovery exceedance (FDX) control, has been extensively studied in the canonical setup
of multiple hypothesis testing where p-values are available (e.g., \cite{genovese:stochastic,genovese:exceedance,lehmann:generalizations}).
An FDP-controlling procedure in this context reports a list of discoveries with the guarantee that $Q$, the FDP among the
reported discoveries, is bounded by the threshold $\alp$ with high confidence: $P(Q>\alpha)\le\gamma$, where
$1-\gamma$ is our desired confidence level.

Here we offer a practical procedure that rigorously controls the FDP in the target-decoy / knockoff competition context
(we mostly stick to the target-decoy terminology, but our analysis is also applicable to the knockoffs).
Specifically, given the desired confidence level $1-\gamma$ and the FDP threshold $\alpha$ (in addition to the target and decoy scores)
our novel ``FDP stepdown'' (FDP-SD) procedure yields a list of target discoveries so that $P(Q>\alpha)\le\gamma$.
That is, when using FDP-SD the FDP can still be larger than the desired $\alpha$;
however, now that probability is bounded by $\gamma$.
Like TDC, FDP-SD's reported discoveries consist of all target wins among the top $k$ scoring PSMs.
FDP-SD also shares with TDC the use of the observed number of decoy wins in the top scores to obtain a bound on
the number of unobserved false target wins. Specifically, as the ``stepdown'' in its name suggests, FDP-SD finds
the rejection threshold by sequentially comparing the number of decoy wins
to pre-computed bounds and stopping with the first index for which the corresponding bound is exceeded.

\KR\ very recently developed a general framework for obtaining simultaneous upper confidence bounds
on the FDP that applies to our competition based setup~\cite{katsevich:simultaneous}. In particular, their approach can be used to provide a competing procedure
to FDP-SD which we refer to as FDP-KRB. In Section \ref{section_applications} we provide extensive evidence that FDP-SD, which
was independently developed, generally offers more power than FDP-KRB, and often substantially so.

FDP-SD is available for download at \url{https://github.com/uni-Arya/stepdownfdp}.

\section{The model}\label{sec:model}

In this section we lay out the assumptions that our analysis relies on (see Supplementary Table \ref{suptable:definitions} for a summary of our notation).
Let $H_i$ ($i=1,\dots,m$) denote our $m$ null hypotheses, e.g., in the spectrum-ID problem $H_i$ is ``the $i$th PSM is incorrect,''
and in the linear regression problem $H_i$ is ``the coefficient of the $i$th feature is 0.''
Associated with each $H_i$ are two competing scores: a target/observed score $Z_i$ (the higher the score
the less likely $H_i$ is) and a decoy/knockoff score $\til Z_{i}$. For example, in the spectrum-ID problem $Z_i$ ($\til Z_{i}$)
is the score of the optimal target (decoy) peptide match to the $i$th spectrum\recomb{.}{, whereas in linear regression
$Z_{i}$ ($\til Z_{i}$) correspond to the point $\lambda$ on the Lasso path at which the feature (its knockoff) entered the model.}

Adopting the notation of \cite{emery:multipleRECOMB} we associate with each hypothesis a score $W_i$ and a target/decoy-win label $L_i$.
By default $W_i=Z_i\vee \til Z_i$ (i.e., the max of the two scores as in spectrum-ID) but as \BC\ pointed out, other functions such as $W_i=|Z_i - \til Z_i|$ can be
considered as well. As for $L_i$:
\[
L_i=\begin{cases} 
      1 & Z_i > \til Z_i \text{ ($H_i$ corresponds to a target/original feature win)} \\
      0 & Z_i = \til Z_i \text{ (tie, $H_i$ is ignored)} \\
      -1 & Z_i < \til Z_i \text{ ($H_i$ corresponds to a decoy/knockoff win)} \\
   \end{cases} .
\]
Because $H_i$ is ignored if $L_i=0$ without loss of generality we assume that $L_i\ne0$ for all $i$.

Let $N=\{i\,:\,H_i \text{ is a true null}\}$ and note that while typically in the context of hypotheses testing $N$ is a constant,
albeit unknown set, it is beneficial here to allow $N$ to be a random set as well.
Our fundamental assumption is the following:
\begin{Assumption}\label{fund_assump}
Conditional on all the scores $\{W_i\}_i$ and all the false null labels $\{L_i\,:\,i\notin N\}$,
the true nulls are independently equally likely to be a target or a decoy win, i.e.,
the random variables (RVs) $\{L_i\,:\,i\in N\}$ are conditionally independent uniform $\pm1$ RVs.
\end{Assumption}
Clearly, if the assumption holds then $\{L_i\,:\,i\in N\}$ are still independent uniform $\pm1$ RVs after ordering
the hypotheses in decreasing order so that $W_1\ge W_2\ge\dots\ge W_m$.
% Hence, without loss of generality the scores are assumed to be in decreasing order.

Some specific competition paradigms that satisfy Assumption \ref{fund_assump} include
\begin{itemize}
	\item
	the theoretical model of TDC introduced by He et al.~\cite{he:theoretical}: their assumptions of ``equal chance'' and ``independence''
	are an equivalent formulation of our assumption;
	\item
	the original FX (fixed design matrix $X$) knockoff scores construction of \BC~\cite{barber:controlling}: see Lemma 1.1 and
	its ensuing discussion, keeping in mind that our $W_i$ is their $|W_i|$ and $L_i$ is the sign of their $W_i$; and
	\item
	the MX (random design matrix) knockoff scores of Cand\`es et al.~\cite{candes:panning}: see Lemma 2 (same notation comment as for FX).
	\item
	the spectrum-ID model proposed in~\cite{keich:improved,keich:controlling} and briefly described in \supsec~\ref{sec:background}.
\end{itemize}

Our list of reported discoveries consists of all target wins among the top $k$ scores for some $k$.
Therefore, without loss of generality we assume our hypotheses are ordered in decreasing order of $W_i$, and our goal is
to analyze the following random variables/processes (for $i=0$ we set all counts to 0):
$D_i=\sum_{j=1}^i1_{\{L_j=-1\}}$ (the number of decoy wins in the top $i$ scores),
$T_i=\sum_{j=1}^i1_{\{L_j=1\}}$ (the corresponding number of target wins; no ties implies $D_i+T_i=i$),
and $N^t_i := \sum_{j=1}^i 1_{\{L_j=1,j\in N\}}$ (the number of true null target wins in the top $i$ scores).
With this notation, the FDP among all target wins in the top $i$ scores is $Q_i = N_i^t / (T_i\vee 1)$.

While the spectrum-ID model is captured by Assumption \ref{fund_assump}, as noted in \supsec~\ref{sec:background},
there are a couple of features that are distinct to this problem. First, the set $N$ of true null hypotheses is random,
and second, a false null (correct PSM) has to correspond to a target win.
This is not the case in general. For example, in the feature selection problem, a feature is a false null when its coefficient in
the regression model is not zero.  It is possible for such a feature to have a lower score than its corresponding
knockoff and hence to be counted as a decoy win.

Finally, here we assumed that all target-decoy ties ($L_i=0$) are thrown out, but if instead we randomly break ties then Assumption \ref{fund_assump} still holds.
In our practical analysis we randomly broke ties.
Similarly, how the $W_i$ are sorted in case of ties should not matter as long as that ordering is independent of the corresponding labels.

\section{Controlling the FDP}

\subsection{\KR' approach to FDP control}
\label{sec:FDP-KRB}

A stochastic process $\{\xi_i\}_{i\in I}$ is a $1-\gamma$ \emph{upper prediction band} for the random process $Z_i$ with $i\in I\subset\N$
if $P(\exists i\in I\,:\,Z_i > \xi_i)\le \gamma$. \KR\ recently developed a general framework for constructing such bands that, as they showed,
can be specialized to construct an upper prediction band $\{\xi_i\}$ on $V_d\coloneqq N^t_{i_d}$ (the number of true null target wins before the $d$th decoy win).
As pointed out by \KR, this band can be used to control the FDP by reporting all target wins among the top $k_{\subKR}$ scores, where
\begin{equation}
k_{\subKR}=\max\left\{ k\le m\,:\,\xi_{D_{k}+1}/T_{k}\le\alpha\text{ or }k=0\right\} \label{eq:tau_FDP_control}.
\end{equation}
We refer to this procedure, summarized as Algorithm \ref{algorithm:FDP-KRB} in \supsec~\ref{supsec:algs}, as FDP-KRB.

\subsection{FDP-SD: a novel approach to FDP control via stepdown}
\label{sec:FDP-SD}

Originating in the canonical context where p-values are available, stepdown procedures work by sequentially comparing the $i$th smallest p-value, $p_{(i)}$, against
a precomputed bound $\del_i$. Specifically, the procedure looks for $k_{\text{SD}}=\max\left\{i\,:\,p_{(j)}\le\del_j\text{ for }j=1,2,\dots,i\right\}$
and rejects the corresponding $k_{\text{SD}}$ hypotheses~\cite{lehmann:generalizations}.

FDP-SD is inspired by a stepdown procedure developed by Guo and Romano to control the FDP when p-values are available~\cite{guo:generalized}.
Because we have no p-values in our competition context, we instead use the number of decoy wins: FDP-SD sequentially goes through the hypotheses
sorted in order of decreasing scores, comparing the observed number of decoy wins $D_i$ with precomputed bounds $\del_{\alpha,\gamma}(i)$
that depend on the desired FDP threshold $\alp$ and the confidence level $1-\gamma$.

The bounds $\del_{\alpha,\gamma}(i)$ are set to allow us to control the FDP when rejecting all target wins in the
top $i$ scores for a fixed $i$. Specifically, imagine we report all target wins in the top $i$ scores if $D_i\le\del_{\alpha,\gamma}(i)$, and
otherwise we report none. Then $\del_{\alpha,\gamma}(i)$ should be sufficiently large so that
regardless of the number of true nulls among the top $i$ scores, 
the probability that the FDP among our reported discoveries exceeds $\alp$ is $\le\gam$.
To ensure optimality of the bound we also require
that the same cannot be guaranteed for any bound greater than $\del_{\alpha,\gamma}(i)$.
It is not difficult to show that this requires us to define $\del_{\alpha,\gamma}(i)$ as:
\begin{equation}
\label{eq:delta_definition}
\delta_{\alp,\gam}(i) \coloneqq \max \left\{d\in\{-1,0,1,\dots,i\}\,:\,F_{B(k(d)+d,1/2)}(d) \le \gam \right\},
\end{equation}
where $k(d) = k(i,d)\coloneqq \lfloor(i-d)\alpha\rfloor + 1$ and $F_{B(n,p)}(\cdot)$ denotes the cumulative distribution function (CDF) of a
binomial $B(n,p)$ RV so $F_{B(k(d)+d,1/2)}(d)=P[B(k(d)+d,1/2)\le d]$.\recomb{}{

The intuition here is that if there are $k(d)$ or more false (true null) discoveries, then the FDP exceeds $\alp$ so we make sure
that the probability there were $k(d)$ or more true null target wins is bounded by $\gam$. The reason we can do this is that the total number
of true nulls in the top $i$ scores is bounded by $k(d)+d$ (it is exactly this for the spectrum-ID problem) and each true null is independently equally
likely to be a target or a decoy win. Hence, the unobserved number of false target wins is stochastically bounded by a $B(k(d)+d,1/2)$ RV.
}

Typically, $\delta_{\alp,\gam}(i)=-1$ for small values of $i$. Indeed, with $i=1$
it is impossible to get any confidence $1-\gam>1/2$ that the corresponding hypothesis is not a true null target win.
Therefore, we should only compare $D_i$ with  $\delta_{\alp,\gam}(i)$ when the latter is $\ge0$.
Using Lemma \ref{Lem:pmd_is_mono} in \supsec~\ref{supsec:proof_thm_FDP-SD},
which shows that $F_{B(k(d)+d,1/2)}(d)$ is increasing in $d\le i$, it is easy to see that with
\begin{equation}
	\label{eq:i0}
	i_0 = i_0(\alp,\gam)\coloneqq \max\{1, \lceil\left(\lceil\log_2\left(1/\gam\right)\rceil-1\right)/\alp\rceil\} ,
\end{equation}
$i\ge i_0$ if and only if $\del_{\alpha,\gam}(i)\ge 0$.\recomb{}{ Note also that for a fixed $\alp$ and $\gam$, $\del_{\alpha,\gam}(i)$ is non-decreasing in $i$.}

After computing $i_0$ FDP-SD finds
\begin{equation}
	\label{eq:k_FDP-SD}
	k_{\text{FDP-SD}} =  \max\Big\{i \, : \, \prod_{j=i_0}^i 1_{D_j\le\del_{\alpha,\gamma}(j)}  = 1 \text{ or } i=0 \Big\},
\end{equation}
where $1_A$ is the indicator of the event $A$, and
% \begin{multline}
% 	\hspace{-2ex}
	% \label{eq:k_FDP-SD}
% 	k_{\text{FDP-SD}} = \max\left\{0,\,i\,:\,D_j\le\del_{\alpha,\gamma}(j) \text{ for }j=i_0,i_0+1,\dots,i\right\}
% \end{multline}
% \begin{equation}
	% \label{eq:k_FDP-SD}
% 	\hspace{-10ex}
% k_{\text{FDP-SD}}=\begin{cases}
% 		\max\left\{i\,:\,D_j\le\del_{\alpha,\gamma}(j)\text{ for }j=i_0,i_0+1,\dots,i\right\} \\
% 			&\hspace{-35ex}  D_{i_0} \le \del_{\alpha,\gamma}(i_0) \\
% 		0 &\hspace{-35ex} D_{i_0}>\del_{\alpha,\gamma}(i_0)
% 	\end{cases}
% \end{equation}
it reports the $T_{k_{\text{FDP-SD}}}$ \emph{target} discoveries (wins) among the top $k_{\text{FDP-SD}}$ scores.
The following theorem guarantees that the FDP-SD procedure, which is summarized in \supsec~\ref{supsec:algs}, controls the FDP.

\begin{Theorem}
\label{thm:FDP-SD}
With $k_{\text{FDP-SD}}$ defined as in \eqref{eq:k_FDP-SD} let $Q_{\text{FDP-SD}}$ be the FDP among the $T_{k_{\text{FDP-SD}}}$
target wins in the top $k_{\text{FDP-SD}}$ scores. Then $P(Q_{\text{FDP-SD}} > \alp)\le\gam$.
\end{Theorem}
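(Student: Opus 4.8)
The plan is to reduce the statement $P(Q_{\text{FDP-SD}}>\alp)\le\gam$ to a single negative-binomial tail bound that is exactly what the definition of $\del_{\alp,\gam}$ controls, and then to upgrade that fixed-index bound so that it applies at the \emph{adaptively chosen} stopping index $k_{\text{FDP-SD}}$. First I would rewrite the FDP event as a statement about counts. Writing $k=k_{\text{FDP-SD}}$, if $T_k=0$ then $Q_{\text{FDP-SD}}=N^t_k/(T_k\vee1)\le T_k=0$, so the event is empty; in particular the case $k=0$ is vacuous. When $T_k\ge1$, since $N^t_k$ is an integer and $T_k=k-D_k$,
\[
Q_{\text{FDP-SD}}>\alp \iff N^t_k>\alp T_k \iff N^t_k\ge\lfloor\alp(k-D_k)\rfloor+1=k(k,D_k).
\]
Moreover, by the definitions \eqref{eq:delta_definition}--\eqref{eq:k_FDP-SD} of $\del_{\alp,\gam}$, $i_0$ and $k_{\text{FDP-SD}}$, whenever $k\ge i_0$ the terminal survival inequality $D_k\le\del_{\alp,\gam}(k)$ holds. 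So it suffices to bound the probability that $N^t_k\ge k(k,D_k)$ while $D_k\le\del_{\alp,\gam}(k)$.

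Next I would expose the probabilistic meaning of the bound. Under Assumption \ref{fund_assump}, after conditioning on the scores and the false-null labels and then sorting, the true-null labels $\{L_j:j\in N\}$ are i.i.d.\ uniform $\pm1$. Reading them in order, let $G_r$ be the number of true-null decoy wins preceding the $r$-th true-null target win; then $G_r$ is negative binomial, and the symmetry $B(n,1/2)\le d\iff B(n,1/2)\ge n-d$ gives $F_{B(k(d)+d,1/2)}(d)=P(G_{k(d)}\le d)$. Hence $\del_{\alp,\gam}(i)$ is exactly the largest $d$ with $P(G_{k(i,d)}\le d)\le\gam$. Two monotonicities drive the argument: $G_r$ is non-decreasing in $r$, and $P(G_{k(i,d)}\le d)$ is non-decreasing in $d$ (raising $d$ both enlarges the decoy budget and lowers the required target count $k(i,d)$). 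These give the fixed-index lemma: for each fixed $i$, on $\{D_i\le\del_{\alp,\gam}(i),\,N^t_i\ge k(i,D_i)\}$ the $k(i,D_i)$-th true-null target win sits among the top $i$, so $G_{k(i,\del_{\alp,\gam}(i))}\le G_{k(i,D_i)}\le D_i\le\del_{\alp,\gam}(i)$; thus this event lies in $\{G_{k(i,\del_{\alp,\gam}(i))}\le\del_{\alp,\gam}(i)\}$, whose probability is $\le\gam$ by the very definition of $\del_{\alp,\gam}(i)$.

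The remaining and main difficulty is that $k_{\text{FDP-SD}}$ is random and chosen adaptively, so the fixed-index lemma cannot be invoked directly; a union bound over $i$ would be both invalid and hopelessly lossy, since each fixed-index event already carries probability up to $\gam$. What rescues the argument is the survival structure the stepdown enforces: $D_j\le\del_{\alp,\gam}(j)$ must hold for \emph{every} $j\in[i_0,k_{\text{FDP-SD}}]$, not merely at the terminal index. I would exploit this through an optional-stopping / maximal-inequality argument on the true-null label process, in the spirit of the Guo--Romano stepdown analysis and of the reverse-time martingale of \BC, rather than treating each index in isolation. Concretely, I would introduce the first-violation stopping time $\tau=\min\{j\ge i_0:D_j>\del_{\alp,\gam}(j)\}$, so that $k_{\text{FDP-SD}}=\tau-1$ when $\tau>i_0$ (the case $\tau=i_0$ forces $k_{\text{FDP-SD}}=0$, already handled), and then show that the probability that the stopped configuration satisfies $N^t_{k_{\text{FDP-SD}}}\ge k(k_{\text{FDP-SD}},D_{k_{\text{FDP-SD}}})$ collapses to the single negative-binomial tail defining $\del_{\alp,\gam}$, yielding the bound $\gam$.

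Verifying that this collapse is \emph{exact} --- that the per-index boundary $\del_{\alp,\gam}$ is tight enough that the adaptive selection accumulates no probability beyond $\gam$ --- is the crux of the proof and the step I expect to demand the most care. The delicate point is that, although the bad event at the random index reduces (via the two monotonicities above) to $G_{k(k_{\text{FDP-SD}},\del_{\alp,\gam}(k_{\text{FDP-SD}}))}\le\del_{\alp,\gam}(k_{\text{FDP-SD}})$, the index and the negative-binomial parameter both fluctuate with the data, so one must show that the worst-case aggregation over the stopping-time trajectory is still governed by the single tail probability $\gam$ rather than by a sum of such probabilities.
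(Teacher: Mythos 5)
You have correctly reduced the event $\{Q_{\text{FDP-SD}}>\alpha\}$ to the count statement $N^t_k\ge\lfloor\alpha(k-D_k)\rfloor+1$ together with the terminal survival inequality $D_k\le\delta_{\alpha,\gamma}(k)$, and your fixed-index lemma (that for each \emph{fixed} $i$ the bad event has probability at most $\gamma$, via the binomial/negative-binomial symmetry built into the definition of $\delta_{\alpha,\gamma}$) matches the paper. But the step you yourself flag as the crux --- passing from the fixed index to the adaptively chosen $k_{\text{FDP-SD}}$ --- is exactly the step you do not supply, and the route you sketch (optional stopping on the first-violation time $\tau=\min\{j\ge i_0: D_j>\delta_{\alpha,\gamma}(j)\}$, ``in the spirit of'' a reverse martingale) is not how the argument goes and is not obviously repairable: that $\tau$ is a genuinely random time driven by the true-null labels, there is no supermartingale in sight whose stopped expectation gives $\gamma$, and you concede that the ``collapse'' to a single tail is unverified. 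This is a genuine gap, not a presentational one.

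The missing idea (from Guo--Romano, and used in the paper's proof) is that after conditioning on all scores and all false-null labels --- which Assumption \ref{fund_assump} permits --- the counts $A_i^t$ of false-null target wins are \emph{deterministic}, and hence so is the index
\[
j \;=\; \min\bigl\{\,i\ge i_0 \;:\; i-A_i^t-\delta_{\alpha,\gamma}(i) \,>\, \alpha\,(i-\delta_{\alpha,\gamma}(i))\,\bigr\},
\]
the first index at which the FDP could exceed $\alpha$ even when $D_i$ sits exactly at the boundary $\delta_{\alpha,\gamma}(i)$. One then shows (using $\delta_{\alpha,\gamma}(i)\le\delta_{\alpha,\gamma}(i+1)$ and $D_{i+1}-D_i\in\{0,1\}$, so that at the stopping index $D_\tau=\delta_{\alpha,\gamma}(\tau)$ when $\tau<m$) that $\{Q_\tau>\alpha\}\subset\{\tau\ge j\}\subset\{D_j\le\delta_{\alpha,\gamma}(j)\}$, where the last containment uses the full survival structure you mention but only at the single deterministic index $j$. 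The whole adaptive problem thereby collapses to one fixed-index binomial tail bound $P(D_j\le\delta_{\alpha,\gamma}(j))\le F_{B(j-A_j^t,1/2)}(\delta_{\alpha,\gamma}(j))\le\gamma$ --- no stopping-time calculus and no union bound. You would also need to handle the case $j=\infty$ separately (there $Q_\tau>\alpha$ forces $\tau=m$, and the bound follows from $F_X(X)$ stochastically dominating a uniform), which your sketch omits entirely.
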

The proof, inspired by that of Theorem 3.2 of \cite{guo:generalized}, is given in \supsec~\ref{supsec:proof_thm_FDP-SD}.
% While intuitively appealing, the proof (inspired by that of Theorem  3.2 of Guo-Romano) that this actually controls the FDP  required some effort.

The bounds $\delta_{\alp,\gam}$ that FDP-SD relies on are computed in \eqref{eq:delta_definition} using binomial CDFs.
Because the binomial distribution is discrete it is typically impossible to find a $d$ for which $F_{B(k(d)+d,1/2)}(d) \le \gam$ holds with equality.
As a result, FDP-SD typically attains a higher confidence level than required: $P(Q_{\text{FDP-SD}} > \alp )< \gam$.
We address this issue by introducing a more powerful, randomized version of FDP-SD in \supsec~\ref{supsec:algs}.
The proof that the randomized version still rigorously controls the FDP is similar to the proof of Theorem~\ref{thm:FDP-SD}
so it is skipped here.

\section{Extension and Limitation}

\subsection{Extending FDP-SD to utilize multiple decoys}

Emery et al.~recently developed FDR-controlling procedures for the setup where we have $d>1$ decoys
for each hypothesis~\cite{emery:multipleRECOMB}. 
Using their framework, which is applicable when the decoys are independently generated, as well as when they
satisfy a weaker exchangeability condition \cite[\supsec~6.13]{emery:multipleRECOMB},
we can extend FDP-SD to take advantage of multiple decoys in a fairly straightforward manner.

Indeed, assume that associated with each of the $m$ hypotheses are $d$ decoys. Let $d_1 \coloneqq d + 1$ and
let $c = i_c/d_1$ and $\lambda = i_\lambda/d_1$ with $i_c, i_\lambda \in \{1,2, \dots, d\}$ be the target and decoy win thresholds
(here we regard these thresholds as predetermined tuning parameters and reserve the question of how to set them for future research).

Let $r_i \in \{1, \ldots, d_1\}$ be the rank of the target score in the combined list of the target and all decoy scores associated with hypothesis $H_i$ (with higher ranks corresponding to larger scores). As usual, we break any ties among the scores at random. Define the label $L_i$ associated with $H_i$ by
\begin{equation}\label{multi-label}
    L_i := \left\{\begin{array}{ll}
        1 & \text{if } r_i \geq d_1 - i_c + 1 \\
        0 & \text{if } r_i \in (d_1 - i_\lambda,\, d_1 - i_c + 1) \\
        -1 & \text{if } r_i \leq d_1 - i_\lambda
    \end{array}.\right.
\end{equation}
In words, if the rank of the target score is among the top $i_c$ ranks (top $(100 \cdot c)$\% ranks) we label $H_i$ as a target win, whereas
if the target rank is among the bottom $d_1 - i_\lambda$ ranks (bottom $[100 \cdot (1-\lambda)]$\% ranks) we label $H_i$ as a decoy win.
Otherwise, we ignore $H_i$ for the rest of the procedure, labeling it with $L_i = 0$.

Define the winning score $W_i$ to be the $s_i^{\text{th}}$ highest ranked score for hypothesis $H_i$, where
\begin{equation}\label{multi-selected-rank}
    s_i := \left\{\begin{array}{ll}
        r_i & \text{if } L_i = 1 \\
        u_i & \text{if } L_i = 0 \\
        \varphi(r_i) & \text{if } L_i = -1
    \end{array}.\right.
\end{equation}
Here, $u_i$ is a (uniformly chosen) random element of $\{d_1-i_c+1,\, \ldots,\, d_1\}$,
and $\varphi : \{1,\, \ldots,\, d_1-i_\lambda\} \rightarrow \{d_1-i_c+1,\, \ldots,\, d_1\}$ is a map of
\textit{losing ranks} (those for which $L_i = -1$) into \textit{winning ranks} (those for which $L_i = 1$).
In words, \eqref{multi-selected-rank} says that if we have a target-winning hypothesis (that is, $L_i = 1$), 
the winning score is the target score; otherwise, the winning score is one of the decoy scores among the winning ranks.
The mapping $\varphi$, which we do not define here, is constructed so that assuming for example that the decoys are independently generated,
the rank $r_i$ of a true null target score is distributed uniformly in $\{1, \ldots, d_1\}$.
The formal definition of this mapping is given in \cite{emery:multipleRECOMB}
but two common choices are the max mapping, $\varphi\equiv d_1$, and the mirror mapping, $\varphi(j)=d_1-j+1$.
Note that this extends the single decoy case, where a truly null hypothesis is required to be a target or decoy win with equal probability (in this
case there is only one possible mapping function).

Once we labeled the hypotheses and computed the winning scores, we apply a slightly generalized version of FDP-SD that is adapted to make use
of the multiple decoys (see Algorithm \ref{algorithm:FDP-SDm} and its randomized version, Algorithm \ref{algorithm:r-FDP-SDm}, in the supplement).

The proof that, for a predetermined choice of $c$ and $\lambda$, both these procedures control the FDP in the resulting list of discoveries
is almost identical to that of Theorem \ref{thm:FDP-SD}. The key difference is that the probability of observing a decoy-winning true null,
given that it was counted, is no longer $1/2$, as in the single decoy case, but instead
\begin{equation*}
    R = R(c, \lambda) := \frac{1 - \lambda}{c + 1 - \lambda}.
\end{equation*}

\subsection{Non-Admissibility of FDP-SD}\label{non-admissible}

In Section \ref{section_applications} below we demonstrate that FDP-SD is generally more powerful than FDP-KRB
and hence, to the best of our knowledge, it is generally the optimal tool for controlling the FDP in the
knockoff/TDC setup.
Still, we next provide evidence that even the randomized version of FDP-SD could potentially be further improved.
Specifically, we show that there exists a valid FDP-controlling procedure $\mathcal{M}$ that uniformly improves on the latter:
it never returns fewer discoveries than the randomized FDP-SD and there exists a specific setup in which it returns more discoveries
with positive probability. In that sense even the randomized FDP-SD is non-admissible.

We define the procedure $\mathcal{M}$ as follows: it agrees with FDP-SD except when $m = 21$, $\alpha = 0.1$, and $\gamma = 0.25$,
and the labels $L_i$ corresponding to the decreasing winning scores $W_i$ satisfy $L_{20}=-1$ and $L_i=1$ for $i=1,\dots,19,21$.
In this scenario $\mathcal{M}$ reports all 20 target wins as discoveries, i.e., with $k_\mathcal{M}$ denoting $\mathcal{M}$'s
cutoff, $k_\mathcal{M}=21$ in this case.

If we let $k_{\text{r-FDP-SD}}$ denote the cutoff of the randomized FDP-SD, then clearly $k_{\text{r-FDP-SD}}\le k_\mathcal{M}$
always holds. Moreover, it is easy to see that given the same set of labels (which, for example, is attained with positive
probability if all hypotheses are true nulls) $k_{\text{r-FDP-SD}}=19<21=k_\mathcal{M}$ with positive probability
(indeed, $\bar\delta_{20} = 0$ with probability 2/3 as per Algorithm \ref{algorithm:r-FDP-SD} in the Supplementary).

Finally, recall that $\mathcal{M}$ only differs from the randomized FDP-SD when
$L_{20}=-1$ and $L_i=1$ for $i=1,\dots,19,21$ and $\bar\delta_{20} = 0$.
If $N_{19}$, the number of true nulls among the top scoring 19 hypotheses, is $\ge2$ then in this scenario
we already have $Q_{k_{\text{r-FDP-SD}}}=Q_{19}>\alp$.
Conversely, $N_{19}\le1$ in which case the number
of null target wins that $\mathcal{M}$ reports in this scenario is bounded by 2, and as it reports 20 target discoveries,
$Q_{\mathcal{M}}=Q_{21}\le\alp$.
Thus, the FDP in $\mathcal{M}$'s list of discoveries exceeds $\alp$ only when the same applies to the randomized FDP-SD and
since the latter controls the FDP so does $\mathcal{M}$.
% Finally, in Supplementary Section \ref{supsec:non-admissible} we show that $\mathcal{M}$ controls the FDP, thus
% establishing that the randomized FDP-SD is non-admissible.

\section{Applications to Real and Simulated Data}
\label{section_applications}

To evaluate the procedures presented here we looked at their performance on simulated and real data where competition-based FDR
control is already an established practice: 
simulated spectrum-ID and peptide detection in mass spectrometry (\supsec~\ref{sec:background}),
feature selection in linear regression, and a previously published application of the knockoff methodology
to genome-wide association studies (GWAS).
In each case our model, and specifically Assumption \ref{fund_assump}, either explicitly holds
or is believed to be a reasonable approximation.

The spectrum-ID model is presented in \supsec~\ref{sec:background}. Here we used a variant of this model described in \cite{keich:progressive}
for which, as explained in \supsec~\ref{supsec:model_simulation},
Assumption \ref{fund_assump} is only approximately valid: for native spectra there is a slightly larger chance
that a true null will be a decoy win (which creates a slightly conservative --- and hence not overly concerning  --- bias).
We generated simulated instances of the spectrum-ID problem using both calibrated and uncalibrated scores
while varying $m$, the number of spectra, among 500, 2k, and 10k and varying $\pi_0$, the proportion of foreign spectra,
among 0.2, 0.5 and 0.8. For each of these 18 data-parameter combinations we randomly drew 40K
instances of simulated target and decoy PSM scores, as described in \supsec~\ref{supsec:model_simulation}.
We then applied the considered FDR/FDP-controlling procedures to each simulated set with FDR/FDP thresholds of $\alpha$ = 1\%, 5\%, and 10\%,
and confidence levels $100(1-\gam)$ = 95\% and 99\%.

As mentioned in Section \ref{sec:model}, our model, and therefore our procedures, apply to controlling the FDR in variable selection via knockoffs.
Hence, we looked at the very first example of Tutorial 1
of ``Controlled variable Selection with Model-X Knockoffs'' (
\href{http://web.stanford.edu/group/candes/knockoffs/software/knockoffs/tutorial-1-r.html}{``Variable Selection with Knockoffs''})~\cite{candes:panning}.
Specifically, we repeated the following sequence of operations 1000 times:
we randomly drew a normally-distributed $1000\times1000$ design matrix and generated a response vector using only 60 of the 1000
variables while keeping all other parameters the same as in the online example (amplitude=4.5, $\rho$=0.25, $\Sigma$ is a Toeplitz matrix whose
$d$th diagonal is $\rho^{d-1}$).
We then computed the model-X knockoff scores (taking a negative score as a decoy win and a positive score as a target win)
and applied all the procedures at FDR/FDP levels $\alp\in\{0.1,0.2\}$ and confidence levels of $1-\gam\in\{0.90, 0.95\}$.

% \begin{SCfigure}
% \includegraphics[width=3.5in]{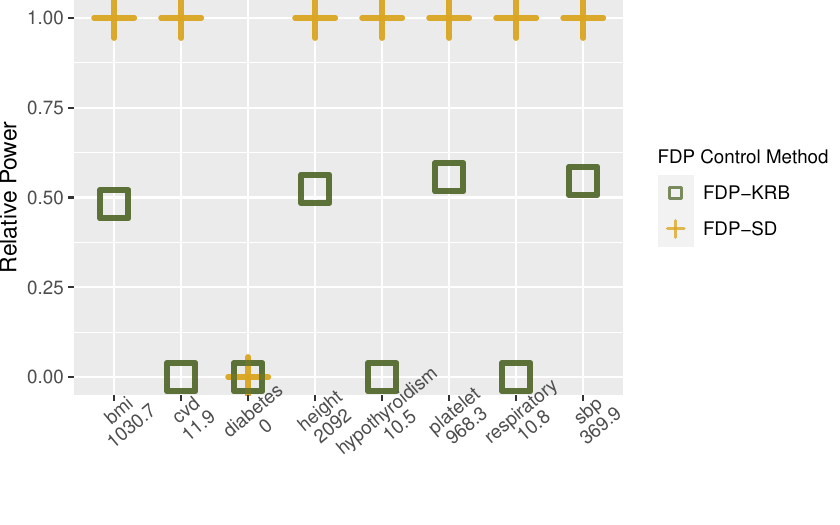}
% \caption{
% For each of the 8 investigated traits we denote the relative power of FDP-KRB, FDP-UB, FDP-SB and FDP-SD defined as
% the ratio of the number of genomic loci the method discovers (averaged over 1K runs for FDP-UB and FDP-SD)
% while controlling the FDP at $\alp=0.05$ and $\gam=0.05$ over the number reported by the optimal method for that trait-parameters combination
% (indicated below the trait).
% \label{fig:BB_power_loss_to_FDP-SD}}
% \end{SCfigure}

Our GWAS example is taken from \cite{katsevich:simultaneous}, which in turn is based on data made publicly available by
Sesia et al.~\cite{sesia:multiresolution}. The goal of this analysis was to identify genomic loci (the features) that are significant factors in the expression of
each of the eight traits that were analyzed (the dependent variables). The raw data was taken from the UK Biobank \cite{bycroft:biobank}
and transformed to a regression problem by Sesia et al., who then created knockoff statistics~\cite{sesia:multiresolution}.
%that first \KR\ and now we further
%analyze.
We downloaded the scores using the functions \verb+download_KZ_data+ and \verb+read_KZ_data+ defined in \KR'
\href{https://raw.githubusercontent.com/ekatsevi/simultaneous-fdp/master/UKBB_utils.R}{UKBB\_utils.R}. Consistent with the latter, we
applied TDC with $\alp=0.1$ and computed upper prediction bounds using $\gam=0.05$, whereas the FDP controlling procedures used $\gamma = 0.05$ and $\alpha\in\{0.05, 0.1,0.2\}$.

When applying the procedures to our datasets we specifically looked at which of the two methods for controlling the FDP ---
FDP-KRB, and the novel FDP-SD (here we used the \emph{randomized version} described in \supsec~\ref{supsec:algs}) ---
generally delivers the most discoveries.

\begin{figure}[h]
\centering
\begin{tabular}{ll}
\hspace{-8ex}
\includegraphics[width=3.5in]{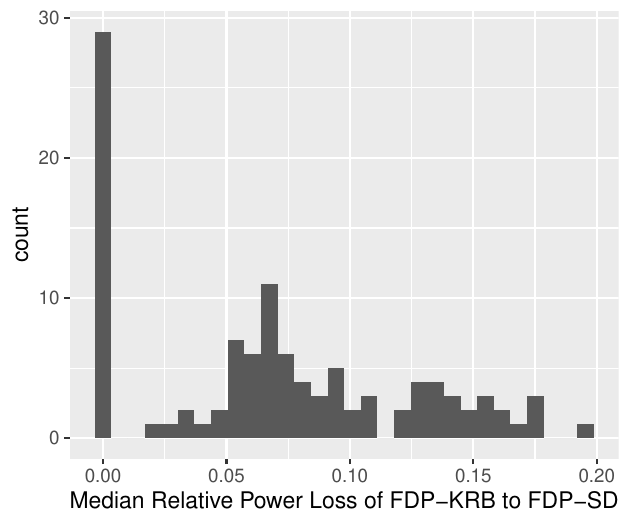} &
\includegraphics[width=2.5in]{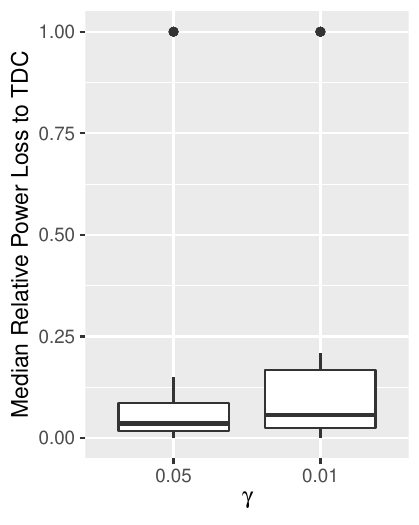}
\end{tabular}
\caption{\textbf{Simulated spectrum-ID: power loss relative to FDP-SD and relative to TDC.}\label{fig:sim_spec}\\
	Left: for each of the 108 combinations of calibrated/uncalibrated scores with $m\in\{500, 2\text{k}, 10\text{k}\}$,
	$\pi_0\in\{0.2, 0.5, 0.8\}$, $\alpha\in\{0.01, 0.05, 0.1\}$, and  $\gam\in\{0.01, 0.05\}$,
	we noted the median of the loss in power when using FDP-KRB compared with using
	FDP-SD. The median was taken over 40K samples, and the relative loss is
	defined as $1 - (T'_{\text{FDP-KRB}} + 10^{-12})/(T'_{\text{FDP-SD}} + 10^{-12})$, where $T'_{\text{FDP-*}}$
	is the number of \emph{true} discoveries reported by the method. Notably FDP-SD's median number of discoveries is
	never smaller than that of FDP-KRB across all 108 data-parameters combinations.
	The median of the 108 median power losses of FDP-KRB is 6.8\%.\\
	Right: using the same randomly generated data data we noted the median of the loss in power when using FDP-SD
	(with confidence $\gam\in\{0.01, 0.05\}$) to control the FDP compared with using TDC to control the FDR at the same level $\alp$.
	The medians of the two sets of 54 median power losses (108 combinations split according to the confidence parameter $\gam$) are:
	5.7\% ($\gam=0.01$) and 3.6\% ($\gam=0.05$).\\\\
}
\end{figure}

% \begin{SCfigure}
% \includegraphics[width=3.5in]{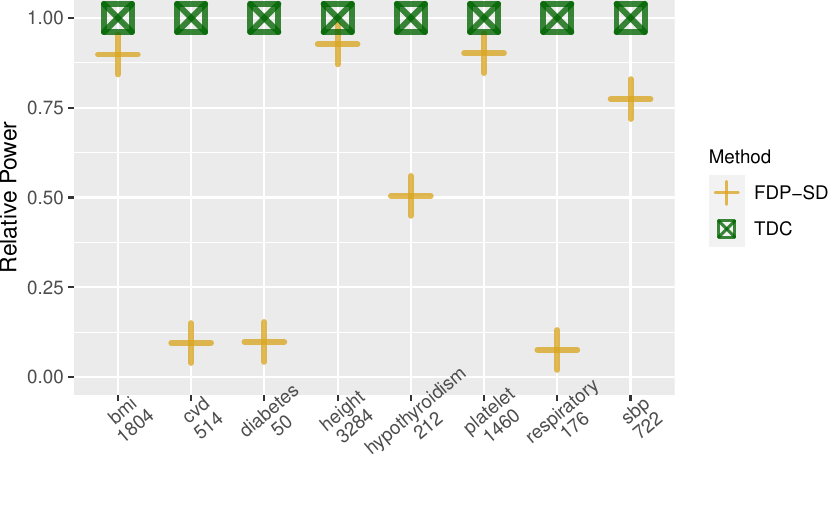}
% \caption{
% For each of the 8 investigated traits we looked at the power loss when controlling the FDP ($\gam=0.05$, $\alp=0.1$) using FDP-SD vs.~controlling
% the FDR (same $\alp$) using TDC.
% \label{fig:BB_FDP-SD_power_loss_to_TDC}}
% \end{SCfigure}

The left panel of Figure~\ref{fig:sim_spec} shows that in the spectrum-ID simulation FDP-KRB's median power never exceeds that of FDP-SD
with a typical power loss of about 7\% compared with the latter. We see similar results
in the GWAS example: Figure \ref{fig:BB_panel} (top-left) shows that for $\alp=0.05$ FDP-SD yields the larger number of discoveries
for all 8 traits with FDP-KRB typically yielding only 0-50\% of the number reported by FDP-SD. For $\alpha=0.1,0.2$ (middle and bottom left panels) the results are
a little more mixed: for three of the 16 trait-parameters combinations FDP-SD loses to FDP-KRB, but for the other 13 FDP-SD yields more
discoveries and typically by a wide margin.
In our linear regression data for all parameter combinations FDP-SD again reports more discoveries than FDP-KRB
(Figure~\ref{fig:lin_reg}).
% Interestingly FDP-SB is always more powerful than FDP-KRB

\begin{figure}
\centering %
\begin{tabular}{ll}
\hspace{-5ex}
Relative power of FDP controlling procedures  & \hspace{2ex} FDP-SD's power loss relative to FDR control \tabularnewline
\hspace{-6ex}
\includegraphics[width=3.5in]{{figures/Biobank/RelativePower_UKBB_alpha5percent_correct}.pdf}  & \includegraphics[width=3.5in]{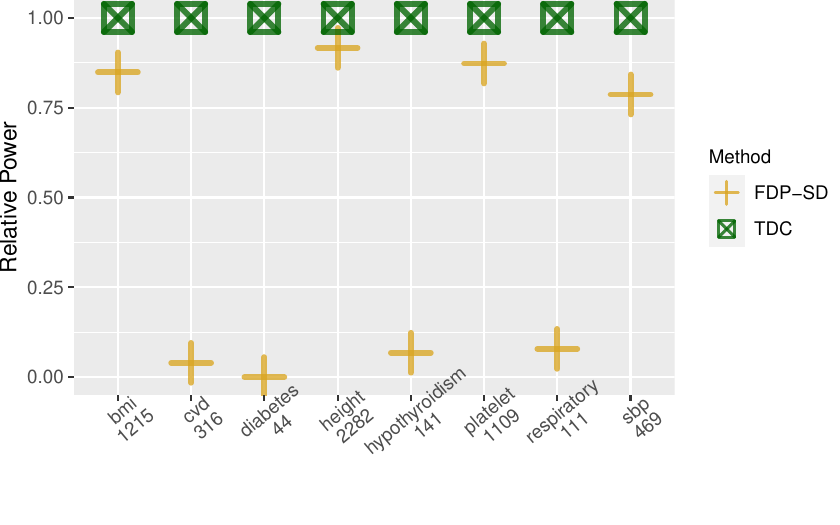} \tabularnewline
\hspace{-6ex}
\includegraphics[width=3.5in]{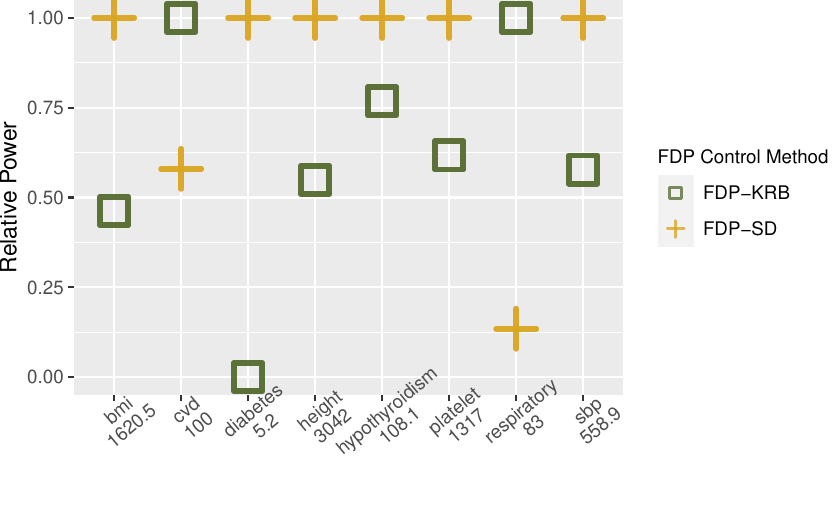} &  \includegraphics[width=3.5in]{{figures/Biobank/RelativePower_UKBB_STtoTDC_alpha10percent_correct}.pdf} \tabularnewline
\hspace{-6ex}
\includegraphics[width=3.5in]{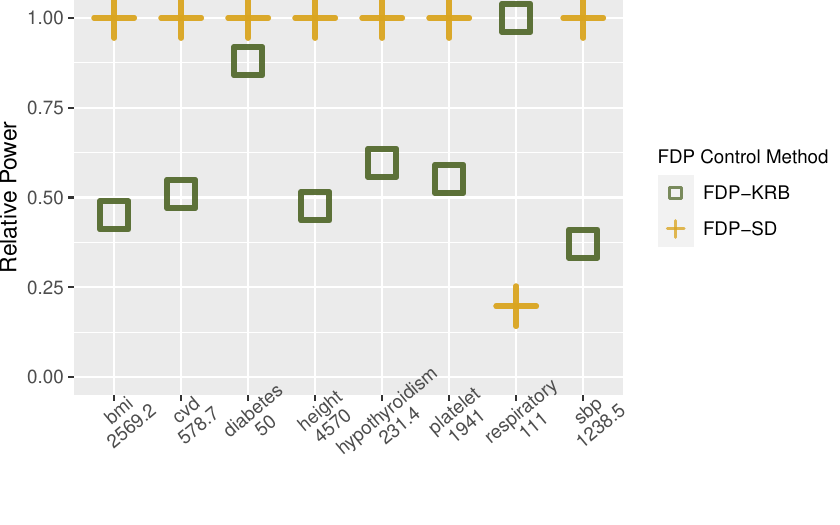}  & \includegraphics[width=3.5in]{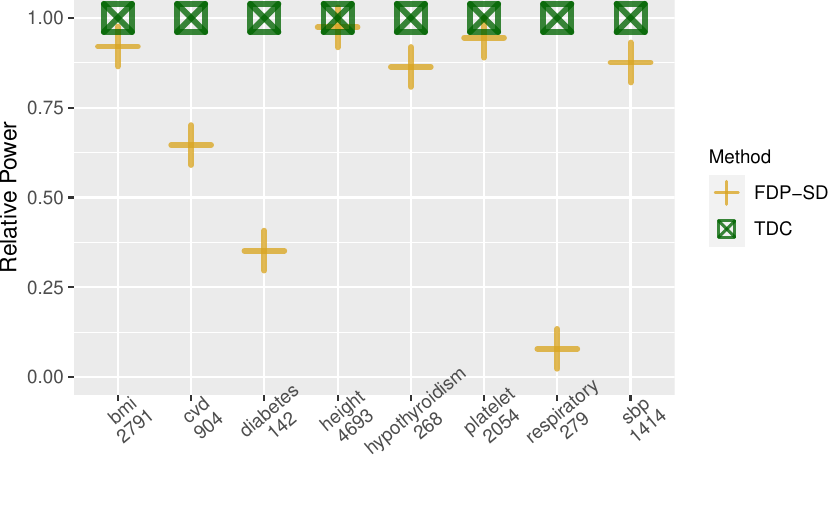} \tabularnewline
\end{tabular}
\caption{\textbf{Power of FDP controlling procedures in the GWAS example.} 
For each of the 8 investigated traits the left column panels describe the relative power of FDP-KRB and FDP-SD defined as
the ratio of the number of genomic loci the method discovers (averaged over 1K runs for FDP-SD),
while controlling the FDP, over the number reported by the optimal method for that trait-parameters combination (indicated below the trait).
The confidence level was fixed at $1-\gam=0.95$ and we varied the FDP/FDR threshold: $\alp=0.05$ (top row), $\alp=0.1$ (middle row), and  $\alp=0.2$ (bottom row).
The right column panels show the power loss when controlling the FDP (confidence $1-\gam=0.95$) using FDP-SD vs.~controlling
the FDR (same $\alp$) using TDC, where we varied the FDP/FDR threshold: $\alp=0.05$ (top row), $\alp=0.1$ (middle row), and  $\alp=0.2$ (bottom row).
\label{fig:BB_panel}}
\end{figure}

In terms of how much power is given up when controlling the FDP using FDP-SD vs.\ controlling the FDR using TDC,
the bottom-left panel of Figure~\ref{fig:sim_spec} shows in the spectrum-ID dataset that the median power loss is
3.6\% when using $\gam=0.05$, and it is 5.7\% when using $\gam=0.01$. In the GWAS example we see wide variations
in terms of power loss, depending on the trait-parameters combination: Figure~\ref{fig:BB_panel} (right column).
A similar variability is observed in the linear regression dataset (Figure~\ref{fig:lin_reg}): compare the violet mark in the TDC column
with the green ($\gam=0.05$) and violet ($\gam=0.1$) marks in FDP-SD's column, as well as the TDC's cyan mark with the corresponding red ($\gam=0.05$)
and cyan ($\gam=0.1$) marks of FDP-SD.

\begin{SCfigure}
\includegraphics[width=3.5in]{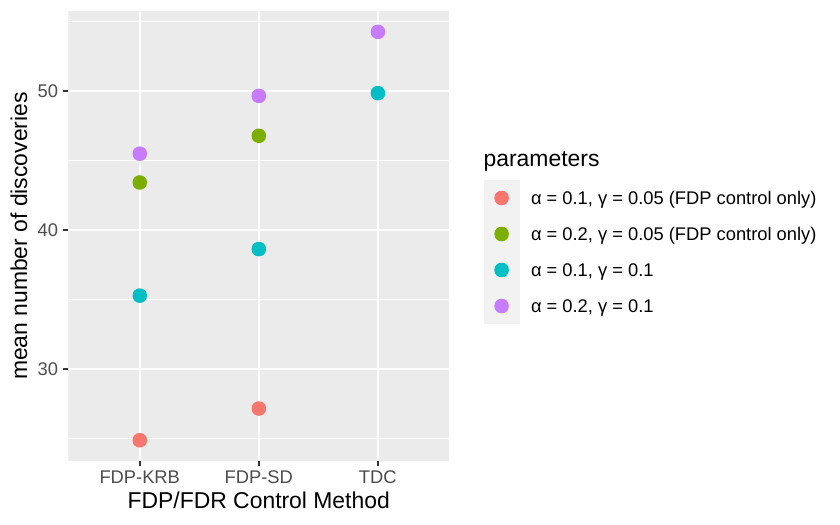}
\caption{\textbf{Linear regression: power of FDP/FDR controlling procedures.}
The figure presents the mean number of \emph{true} discoveries when controlling the FDP using 
two methods discussed in this paper (FDP-KRB, and FDP-SD), as well as when controlling the FDR using TDC.
The FDR/FDP threshold is $\alp\in\{0.1,0.2\}$ and the confidence parameters for FDP control is $\gam\in\{0.1, 0.05\}$.
\label{fig:lin_reg}}
\end{SCfigure}

We also examined, using real data, the performance of FDP-SD in the peptide detection problem.
Specifically, we used the same methodology as described in \cite{emery:multipleRECOMB} --- recapped here in \supsec~\ref{supsec:isb18_analysis} ---
for detecting peptides in the ISB18 data set~\cite{klimek:standard}.
This process generated 900 sets of paired target and decoy scores assigned to each peptide in our database.
We then applied TDC and FDP-SD to each of these 900 sets using
an FDR/FDP threshold of $\alpha$ = 5\%, and confidence level of $100(1-\gam)$ = 95\%.
We relied on the controlled nature of the experiment that generated the ISB18 data to estimate the FDP in each case
(\supsec~\ref{supsec:isb18_analysis}).

Even though our model is just an approximation of the real peptide detection problem, FDP-SD's FDP exceeded $\alp$
in only 36/900, or 4\% of the samples, which is less than the allowed error rate of $\gamma=0.05$.
Additionally, Figure~\ref{fig:rel_loss_isb18}
shows how the relative power loss associated with using FDP-SD is distributed across the 900 samples (median power loss is 6.7\%).
For reference, the distribution of TDC's FDP in this experiment is given in Figure~\ref{fig:intro} (left).

\begin{SCfigure}
\includegraphics[width=3in]{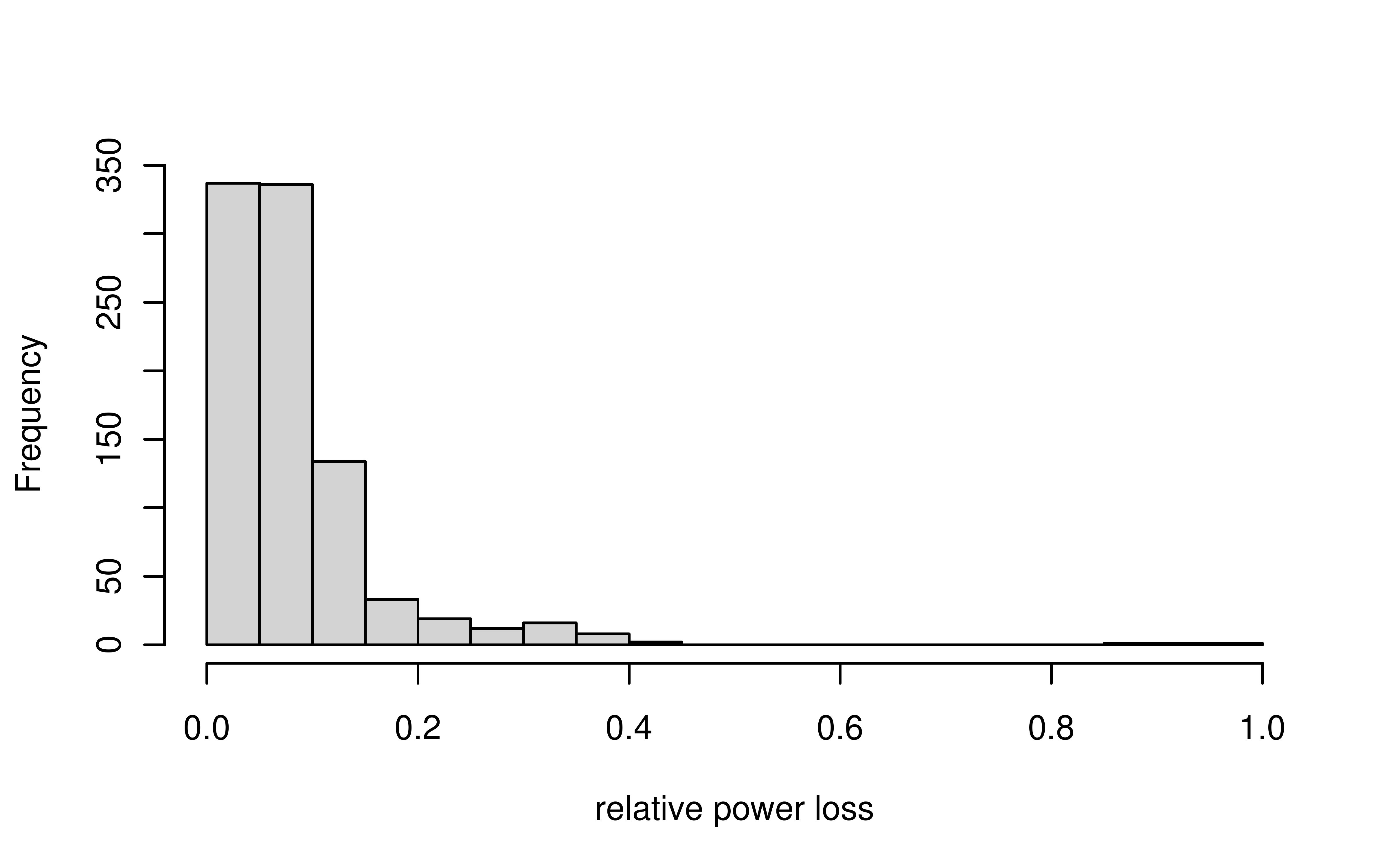}
\caption{\textbf{Relative power loss in detected peptides when controlling the FDP instead of the FDR (ISB18).}
The distribution of the relative loss in power (number of correctly detected peptides) when using FDP-SD to control the FDP
at level $\alp=0.05$ with 95\% confidence compared with using TDC 
to control the FDR at the same $\alp$. The median power loss among 900 samples was 6.7\%.
\label{fig:rel_loss_isb18}}
\end{SCfigure}

Finally, it is instructive to look at what happens in the spectrum-ID problem when we vary $m$ while keeping the other parameters constant
($\alp=0.05,\gam=0.05,\pi_0=0.5$).
\supfig~\ref{supfig:vary_m} shows that, as expected, increasing $m$ yields diminished variability in TDC's FDP  (top row). At the same time the power loss
associated with FDP-SD's increased confidence also diminishes (middle row).
A similar evolution is observed in \supfig~\ref{supfig:vary_pi0} as we decrease $\pi_0$ while keeping all other parameters the same ($\alp=0.05,\gam=0.05,m=$2K).
This is not surprising because increasing $m$ and decreasing $\pi_0$ have the same effect of increasing the number of discoveries.

%
% \recomb{}{\subsection{Spectrum identification problem (ISB18)}
%
% When we redraw only the decoy scores ($\til Z_i$) we do not change $Y_i$ hence the samples are not independent.
% }

% \subsection{Peptide detection (isb18)}
% \label{sec:peptide}
%
% We used the methodology of \citet{emery:multipleRECOMB} --- recapped here in \supsec~\ref{supsec:isb18_analysis} ---
% for detecting peptides in the ISB18 data set~\cite{klimek:standard}.
% This process generated 900 sets of paired target and decoy scores assigned to each peptide in our database.
% We then applied TDC, FDP-SD, and TDC-UB to each of these 900 sets using
% an FDR/FDP threshold of $\alpha$ = 5\%, and confidence level of $100(1-\gam)$ = 95\%.
% We relied on the controlled nature of the experiment that generated the ISB18 data to estimate the FDP in each case
% (\supsec~\ref{supsec:isb18_analysis}).
%
% Even though our model is an approximation of the real peptide detection problem, FDP-SD's FDP exceeded $\alp$
% in only 36/900, or 4\% of the samples, which is less than the allowed error rate of $\gamma=0.05$.
% For the same parameters, the bound provided by TDC-UB is always above TDC's FDP. The median of this
% upper bound is 13\% compared with 8.1\% which is the 0.95 quantile of TDC's FDP.
% Finally, Figure~\ref{fig:results}(A)
% shows how the relative power loss associated with using FDP-SD is distributed across the 900 samples (median power loss is 6.7\%).
% For reference, the distribution of TDC's FDP in this experiment is given in Figure~\ref{fig:intro} (A).

\section{Discussion}

FDP-SD was developed to address the gap between controlling the FDR and the FDP in a competition-based setup. 
In practice, this difference can be substantial, particularly when the list of discoveries is not very large.
Our procedure was developed independently of the recent work of \KR~\cite{katsevich:simultaneous}. The latter
provides a much more general framework that can be applied to produce an alternative to FDP-SD, but as we show
here, our more focused approach provides a non-trivial advantage.

Another related work is by Janson and Su who, while focusing on $k$-FWER control (i.e., no more than $k$ false discoveries),
suggest how one can use their approach to gain control of the FDP (FDX-control)~\cite{janson:familywise}.
However, two of their suggestions are computationally impractical while the third is based on the
Romano-Wolf heuristic~\cite{romano:control} rather than rigorously proved. Interestingly, we believe that a simple variation on that heuristic yields
FDP-SD, which we propose and rigorously establish the validity of (Section \ref{sec:FDP-SD}).
	% Note that Janson and Su offer some randomized version of their test but it is not clear how it can be applied in conjunction with the RW heuristic.
	%     % UK (email): I think the fix to what Janson and Su proposed is to start with the smallest $k$ for which $2^k \le \alpha$. Moreover I suspect that in that case it’ll coincide with my proposed LR-like step-down procedure. Still, the RW is just a heuristic and we presumably have a finite sample proof

Complexity-wise, FDP-SD requires sorted data, but beyond that it is linear; hence, its runtime complexity is $O(m\log m)$.

In terms of future research there are a couple of avenues we would like to explore. First,
we showed how to extend FDP-SD so that it can control the FDP while taking advantage of multiple decoys using a pre-determined choice of $c$ and $\lambda$.
However, as shown by Emery et al.~in the context of FDR control the choice of $c$ and $\lambda$ can greatly affect the power of the procedure.
This suggests we can similarly benefit from such optimization when controlling the FDP.

Second, we provided a somewhat contrived procedure that outperforms FDP-SD while still controlling the FDP by improving on the latter in a very specific scenario.
We would like to explore whether FDP-SD can be improved upon in a more systematic way including considering the general approach to such questions that was
recently proposed by Goeman et al.~\cite{goeman:only}.

% \newpage

\section{Supplementary Material}

\subsection{Notations and Abbreviations}

\begin{table}[h!]
\fontsize{8.5}{10}\selectfont
\centering
\begin{tabular}{cp{5in}}
\hline 
\textbf{Variable} & \textbf{Definition} \\
\hline
$m$ & the number of hypotheses (e.g., PSMs, features, peptides)\\
$\alp$ & the FDR/FDP threshold\\
$\gam$ & $1-\gam$ is the confidence level\\
$N$ & the set of indices of the true null hypotheses (unobserved, could be a random set)\\
$\sig_i$ & a (virtual) spectrum\\
$X_i$ & the score of the match between $\sig_i$ and its ``generating peptide''\\
$Y_i$ & the score of the best match to $\sig_i$ in the target database minus the generating peptide (if it exists)\\
$Z_i$ & the target score (observed, the higher the score the less likely $H_i$ is, $\max\{X_i,Y_i$\} in simulated spectrumID)\\
$\til Z_{i}$ & decoy/knockoff score (generated by the user, the score of the best match to $\sig_i$ in the decoy database in simulated spectrumID)\\
$L_i$ & with values in $\{-1,1\}$ the target/decoy win labels (assigned, ties are randomly broken or the corresponding hypotheses are dropped)\\
$W_i$ & the winning score (assigned, WLOG assumed in decreasing order)\\
$D_i$ & the number of decoy wins in the top $i$ scores\\
$T_i$ & the number of target wins in the top $i$ scores\\
$N^t_i$ & the number of true null target wins in the top $i$ scores\\
$T'_i$ & the number of true/correct discoveries in the top $i$ scores ($T'_i=T_i-N^t_i$)\\
$Q_i$ & the FDP among the target wins in the top $i$ scores\\
$V_d$ & the number of true null target wins before the $d$th decoy win.\\
\hline 
\end{tabular}
\caption{\textbf{Commonly used notations and their definitions.}
  \label{suptable:definitions}}
\end{table}

\begin{table}[h!]
\fontsize{8.5}{10}\selectfont
\centering
\begin{tabular}{lp{5in}}
\hline 
\textbf{Abbreviation} & \textbf{Definition} \\
\hline
MS/MS & Tandem Mass Spectrometry\\
% MHT & Multiple Hypothesis Testing (as in the canonical setup where p-values are available)\\
PSM & Peptide-Spectrum Match (the match between a spectrum and its best matching database peptide)\\
spectrum-ID & Spectrum Identification (the problem of matching spectra to the peptides that generated them)\\
RV & Random Variable\\
FDP & False Discovery Proportion (the proportion of the discoveries which is false - a RV)\\
FDX & False Discovery Exceedance (an alternative term for FDP-control that is used in the literature)\\
FDR & False Discovery Rate (the expected value of the FDP taken with respect to the true nulls)\\
TDC & Target Decoy Competition (canonical approach to FDR control)\\
FDP-SD & FDP-Stepdown (our recommended new procedure to control the FDP)\\
FDP-KRB & FDP-\KR\ Band (an alternative new FDP-controlling procedure based on the \KR\ band)\\
GWAS & Genome-Wide Association Studies (here referring to a specific analysis of 8 traits using Biobank data)\\
\hline 
\end{tabular}
\caption{\textbf{Commonly used abbreviations/names and their definitions.}
  \label{suptable:abbreviations}}
\end{table}

\clearpage

\subsection{Brief background on shotgun proteomics and the spectrum-ID model}\label{sec:background}

Tandem mass spectrometry (MS/MS) currently provides the most efficient means of studying proteins in a high-throughput fashion.
\recomb{} 
{As such, MS/MS is the driving technology for much of the rapidly growing field of proteomics --- the large scale study of proteins.
Proteins are the primary functional molecules in living cells, and knowledge of the protein complement in a cellular population provides
insight into the functional state of the cells.
Thus, MS/MS can be used to functionally characterize cell types, differentiation stages, disease states, or species-specific differences.
% ***
% In a ``shotgun proteomics'' MS/MS experiment, proteins in a complex biological sample are extracted and digested into peptides, each with an associated charge.
% These charged peptides, called ``precursors,'' are measured by the mass spectrometer, and a subset of the precursors
% are then selected for further fragmentation into charged ions, which are detected and recorded by a second round of mass spectrometry
% \citep{hernandez:automated,noble:computational}.  The recorded tandem fragmentation spectra, or spectra for short, are
% then subjected to computational analysis.
% ***

}In a ``shotgun proteomics'' MS/MS experiment, the proteins that are extracted from a complex biological sample 
are not measured directly. For technical reasons, the proteins are first digested into shorter
chains of amino acids called ``peptides.''  The peptides are then run through the mass spectrometer, in which distinct peptide sequences generate
corresponding spectra. A typical 30-minute MS/MS experiment will generate approximately 18,000 such spectra.
Canonically, each observed spectrum is generated by a single peptide.
Thus, the first goals of the downstream analysis are to identify which peptide generated each of the observed spectra
(the spectrum-ID problem mentioned above) and to determine which peptides and which proteins were present in the sample
(the peptide/protein detection problems).

In each of those three problems the canonical approach to determine the list of discoveries is by controlling the FDR
through some form of target-decoy competition.
One reason this approach was adopted, rather than relying on standard methods for control of the FDR such as the
procedures by Benjamini and Hochberg~\cite{benjamini:controlling} or Storey~\cite{storey:direct}, is that the latter require
sufficiently informative p-values and, initially, no such p-values were computed in this context (using the decoys we can always assign
a ``1-bit p-value'' to the hypotheses but those are not informative enough to obtain effective results using the latter procedures).
Moreover, the proteomics dataset typically consist of both ``native'' spectra
(those for which their generating peptide is in the target database) and ``foreign'' spectra
(those for which it is not). These two types of spectra create different types of false positives, implying
that we typically cannot apply the standard FDR controlling procedures to the spectrum-ID problem
even if we are able to compute p-values~\cite{keich:controlling}.

A simple model that captures the distinction between native and foreign spectra
and which here we refer to as ``the spectrum-ID model'' is described in~\cite{keich:improved,keich:controlling}.
Briefly, each virtual ``spectrum'' $\sig_i$ is associated
with three randomly drawn scores: the score $X_i$ of the match between $\sig_i$ and its generating peptide, the score $Y_i$ 
of the best match to $\sig_i$ in the target database minus the generating peptide (if $\sigma_i$ is native),
and the score $\til Z_i$ of the best match to $\sig_i$ in the decoy database. The three scores are drawn
independently of one another as well as of the corresponding scores of all other spectra.
More specifically, $Y_i$ and $\til Z_i$ are sampled from a null distribution (which can be spectrum-specific),
whereas $X_i$ is sampled from an alternative distribution for a native $\sig_i$, and $X_i$ is set to $-\infty$
for a foreign $\sig_i$. The target PSM score is $Z_i=X_i\vee Y_i$, where $x\vee y$ denotes $\max\{x,y\}$, and the decoy PSM score is $\til Z_i$.
Finally, the PSM is incorrect when $Y_i\vee\til Z_i>X_i$.

Notably, conditional on the scores $X_i$, this model satisfies Assumption \ref{fund_assump} from the main paper:
conditional on the PSM being incorrect ($Y_i\vee\til Z_i>X_i$) it is easy to see that
$P(Y_i>\til Z_i)=P(Y_i<\til Z_i)$ independently of everything else.
That said, it is worth pointing out a couple of features
that are distinct to this setup. First, the set $N$ of true null hypotheses is random because it depends on the decoy
scores $\til Z_i$ (as well as on the random target scores $Y_i$): by definition a PSM is incorrect if $Y_i\vee\til Z_i>X_i$.
Second, a false null (correct PSM) has to correspond to a target win. %$A_i^D = 0$
This is not the case in general. For example, in the feature selection problem, a feature is a false null when its coefficient in
the regression model is not zero.  It is possible for such a feature to have a lower score than its corresponding
knockoff and hence to be counted as a decoy win.

\subsection{Procedures in Algorithmic Format}
\label{supsec:algs}

\vspace*{\fill}

\begin{algorithm}
\textbf{\caption{TDC}\label{algorithm:TDC}}
\SetKwFor{For}{For}{let:}{endfor}
\SetKwIF{If}{ElseIf}{Else}{If}{then:}{else if:}{else:}{endif} 
\KwIn{
    \begin{alglist}
        \item an FDR threshold $\alpha$\;
        \item a list of labels $L_i=\pm1$ where $1$ indicates a target win and $-1$ a decoy win (sorted so that the corresponding scores $W_i$ are in decreasing order: $W_1\ge W_2\ge\dots\ge W_m$)\;
    \end{alglist}
}
\KwOut{
    \begin{alglist}
        \vspace{-0.35em}
        \item an index $k_{\text{TDC}}$ specifying that target wins in the top $k_{\text{TDC}}$ hypotheses are discoveries\;
    \end{alglist}
}
\vspace{-0.5em}
\dottedhfill\\
\setstretch{1.4}
\For{$i=0$ \KwTo $m$} {
    $D_i$ be the number of $-1$'s in $\{L_1,\ldots,L_i\}$\;
    $T_i := i - D_i$\;
}
$M := \{k \in \{1, \ldots, m\} : (D_k+1)/T_k \leq \alpha\}$\;
\eIf{$M = \emptyset$}{
    \Return $k_{\text{TDC}} := 0$\;
}{
    \Return $k_{\text{TDC}} := \max(M)$\;
}
\end{algorithm}

\vspace*{\fill}

\newpage

\vspace*{\fill}

\begin{algorithm}
\textbf{\caption{FDP-KRB}\label{algorithm:FDP-KRB}}
\SetKwFor{For}{For}{let:}{endfor}
\SetKwFor{While}{While}{let:}{endw}
\SetKwIF{If}{ElseIf}{Else}{If}{then:}{else if:}{else:}{endif}
\SetKwInput{KwData}{Initialization}
\KwIn{
    \begin{alglist}
        \item an FDP threshold $\alpha$\;
        \item a confidence parameter $\gam$ (for a $1-\gam$ confidence level)\;
        \item a list of labels $L_i=\pm1$ where $1$ indicates a target win and $-1$ a decoy win (sorted so that the corresponding scores $W_i$ are in decreasing order: $W_1\ge W_2\ge\dots\ge W_m$)\;
    \end{alglist}
}
\KwOut{
    \begin{alglist}
        \item an index $k_{\text{KR}}$ specifying that target wins in the top $k_{\text{KR}}$ hypotheses are discoveries\;
    \end{alglist}
}
\vspace{-0.5em}
\dottedhfill\\
\setstretch{1.2}
$C \coloneqq -\log(\gamma)/\log(2-\gamma)$\;
\For{$i=1$ \KwTo $m$} {
	$D_i$ be the number of $-1$'s in $\{L_1,\ldots,L_i\}$\;
    $T_i := i - D_i$\;
}
$M := \{k \in \{1, \ldots, m\} : [C(D_k+1)]/T_k \leq \alpha\}$\;
\eIf{$M = \emptyset$}{
    \Return $k_{\text{KR}} := 0$\;
}{
    \Return $k_{\text{KR}} := \max(M)$\;
}
\end{algorithm}

% \begin{algorithm}
% 	\textbf{\caption{FDP-KRB}\label{algorithm:FDP-KRB}}
% 	\KwIn{
% 		the FDR threshold $\alpha$;\\
% 		the confidence parameter $\gam$ (for a $1-\gam$ confidence);\\
% 		a list of labels $L_i=\pm1$ where 1 indicates a target win and -1 a decoy win (sorted so that the corresponding scores $W_i$ are decreasing:
% 		$W_1\ge W_2\ge\dots\ge W_m$)}
% 	\KwOut{an index $k\subKR$ such that all target wins in the top $k\subKR$ scores are discoveries}
% 	Let $C \coloneqq -\log(\gamma)/\log(2-\gamma)$\;
% 	\For{$i=0,\dots,m$} {
% 		Let $D_i$ to be the number of $-1$'s in $\{L_1,\ldots,L_i\}$\;
% 		Let $T_i\coloneqq i - D_i$\;
% 	}
% 	Let $k\subKR=\max\left\{k\,:\,[C(D_k+1)]/T_k\le\alp \text{ or }k=0\right\}$\;
% 	\Return $k\subKR$\;
% \end{algorithm}

\vspace*{\fill}

\newpage

\vspace*{\fill}

\begin{algorithm}
\textbf{\caption{FDP-SD}\label{algorithm:FDP-SD}}
\SetKwFor{For}{For}{let:}{endfor}
\SetKwIF{If}{ElseIf}{Else}{If}{then:}{else if:}{else:}{endif} 
\KwIn{
    \begin{alglist}
        \item an FDP threshold $\alpha$\;
        \item a confidence parameter $\gamma$ (for a $1 - \gamma$ confidence level)\;
        \item a list of labels $L_i=\pm1$ where $1$ indicates a target win and $-1$ a decoy win (sorted so that the corresponding scores $W_i$ are in decreasing order: $W_1\ge W_2\ge\dots\ge W_m$)\;
    \end{alglist}
}
\KwOut{
    \begin{alglist}
        \vspace{-0.35em}
        \item an index $k_{\text{FDP-SD}}$ specifying that target wins in the top $k_{\text{FDP-SD}}$ hypotheses are discoveries\;
    \end{alglist}
}
\vspace{-0.5em}
\dottedhfill\\
\setstretch{1.4}
$i_0 := \max\{1, \lceil\left(\lceil\log_2\left(1/\gam\right)\rceil-1\right)/\alp\rceil\}$\;
\For{$i=i_0$ \KwTo $m$} {
    $D_i$ be the number of $-1$'s in $\{L_1,\ldots,L_i\}$\;
    $\delta_i := \max\left\{d\in\{0,1,\dots,i\}\,:\,F_{B(\lfloor (i-d)\alpha \rfloor + 1 + d,1/2)}(d) \le \gam  \right\}$ where $F_{B(n,p)}$ denotes the CDF of a binomial $B(n,p)$ RV\;
}
\eIf{$D_{i_0} \leq \delta_{i_0}$}{
    \Return $k_{\text{FDP-SD}} := \max\left\{i \in \{i_0, \ldots, m\}\,:\,D_j\le\del_j\text{ for all }j=i_0,i_0+1,\dots,i\right\}$\;
}{
    \Return $k_{\text{FDP-SD}} := 0$\;
}
\end{algorithm}

\vspace*{\fill}

\newpage

\begin{algorithm}[h]
\footnotesize
\textbf{\caption{FDP-SD (randomized)}\label{algorithm:r-FDP-SD}}
\SetKwFor{For}{For}{let:}{endfor}
\SetKwFor{While}{While}{let:}{endw}
\SetKwIF{If}{ElseIf}{Else}{If}{then:}{else if:}{else:}{endif}
\SetKwInput{KwData}{Initialization}
\KwIn{
    \begin{alglist}
        \item an FDP threshold $\alpha$\;
        \item a confidence parameter $\gamma$ (for a $1 - \gamma$ confidence level)\;
        \item a list of labels $L_i=\pm1$ where $1$ indicates a target win and $-1$ a decoy win (sorted so that the corresponding scores $W_i$ are in decreasing order: $W_1\ge W_2\ge\dots\ge W_m$)\;
    \end{alglist}
}
\KwOut{
    \begin{alglist}
        \vspace{-0.35em}
        \item an index $k_{\text{r-FDP-SD}}$ specifying that target wins in the top $k_{\text{r-FDP-SD}}$ hypotheses are discoveries\;
    \end{alglist}
}
\vspace{-0.5em}
\dottedhfill\\
\setstretch{1.2}
$i_0 := \max\{1, \lceil\left(\lceil\log_2\left(1/\gam\right)\rceil-1\right)/\alp\rceil\}$\;
\For{$i=i_0$ \KwTo $m$} {
    $D_i$ be the number of $-1$'s in $\{L_1,\ldots,L_i\}$\;
    $\delta_i := \max\left\{d\in\{0,1,\dots,i\}\,:\,F_{B(\lfloor(i-d)\alpha\rfloor + 1 + d,1/2)}(d) \le \gam  \right\}$ where $F_{B(n,p)}$ denotes the CDF of a binomial $B(n,p)$ RV\;
}
\textbf{Set} $i := i_0$ and $\delta_{i_0 - 1} := -1$ and $\bar\delta_{i_0 - 1} = 0$\;
\While{$i\leq m$}{
    $k_0 := \lfloor(i-\delta_i)\cdot\alpha\rfloor + 1$\;
    $k_1 := \lfloor((i - (\delta_i + 1))\cdot \alpha\rfloor + 1$\;
    $p_0 := F_{B(k_0+\delta_i, 1/2)}(\delta_i)$\;
    $p_1 := F_{B(k_1+\delta_i+1, 1/2)}(\delta_i+1)$\;
    $w_i := (p_1 - \gam)/(p_1 - p_0)$\;
    \eIf{$\bar\delta_{i-1} = \delta_i + 1$}{
            $\bar\delta_i \eqq \bar\delta_{i-1}$\;
        }{
			\eIf{$\delta_i > \delta_{i-1}$}{
				$w' \eqq w_i$
			}{
				$w' \eqq w_i/w_{i-1}$\;
			}
        }
    \textbf{Randomly set} $\bar\delta_i \eqq \delta_i$ or $\bar\delta_i \eqq \delta_i + 1$ with probabilities $w'$ and $1 - w'$ respectively\;
    \eIf{$D_i \leq \bar\delta_i$}{
        $i\mapsto i+1$\;
    }{
        {\bf break}\;
    }
}
\eIf{$D_{i_0} \leq \bar\delta_{i_0}$}{
    \Return $k_{\text{r-FDP-SD}} := i-1$\;
}{
    \Return $k_{\text{r-FDP-SD}} := 0$\;
}
\end{algorithm}

\newpage

\begin{algorithm}
\textbf{\caption{FDP-SD (multiple decoys)}\label{algorithm:FDP-SDm}}
\SetKwFor{For}{For}{let:}{endfor}
\SetKwFor{While}{While}{let:}{endw}
\SetKwIF{If}{ElseIf}{Else}{If}{then:}{else if:}{else:}{endif}
\SetKwInput{KwData}{Initialization}
\KwIn{
    \begin{alglist}
        \item an FDP threshold $\alpha$\;
        \item a confidence parameter $\gam$ (for a $1-\gam$ confidence level)\;
        \item $d_0$ competing decoys\;
        \item competition parameters $c = i_c/(d_0+1)$ and $\lambda = i_\lambda/(d_0+1)$ for $i_c, i_\lambda \in \{1, \ldots, d_0\}$\;
        \item a list of labels $L_i \in \{-1, 0, 1\}$ where $1$ indicates a target win, $-1$ a decoy win and $0$ an uncounted hypothesis (sorted so that the corresponding scores $W_i$ are decreasing: $W_1\ge W_2\ge\dots\ge W_m$)\;
    \end{alglist}
}
\KwOut{
    \begin{alglist}
        \item an index $k_{\text{FDP-SDm}}$ specifying that target wins in the top $k_{\text{FDP-SDm}}$ hypotheses are discoveries\;
    \end{alglist}
}
\vspace{-0.5em}
\dottedhfill\\
\setstretch{1.4}
$R := (1 - \lambda) / (c + 1 - \lambda)$\;
$i_0 := \max\{1, \lceil \left(\lceil \log_{1-R}(\gamma) \rceil - 1 \right) / \alp \rceil\}$\;
\For{$i=i_0$ \KwTo $m$} {
	$D_i$ be the number of $-1$'s in $\{L_1,\ldots,L_i\}$\;
	$\delta_i \coloneqq \max\left\{d\in\{0,1,\dots,i\}\,:\,F_{B(\lfloor(i-d)\alpha\rfloor + 1 + d,R)}(d) \le \gam  \right\}$ where $F_{B(n,p)}$ denotes the CDF of a binomial $B(n,p)$ RV\;
}
\eIf{$D_{i_0} \leq \delta_{i_0}$}{
    \Return $k_{\text{FDP-SDm}} := \max\left\{i \in \{i_0, \ldots, m\}\,:\,D_j\le\del_j\text{ for all }j=i_0,i_0+1,\dots,i\right\}$\;
}{
    \Return $k_{\text{FDP-SDm}} := 0$\;
}
\end{algorithm}

\newpage

\begin{algorithm}[h]
\footnotesize
\textbf{\caption{FDP-SD (multiple decoys, randomized)}\label{algorithm:r-FDP-SDm}}
\SetKwFor{For}{For}{let:}{endfor}
\SetKwFor{While}{While}{let:}{endw}
\SetKwIF{If}{ElseIf}{Else}{If}{then:}{else if:}{else:}{endif}
\SetKwInput{KwData}{Initialization}
\KwIn{
    \begin{alglist}
        \item an FDP threshold $\alpha$\;
        \item a confidence parameter $\gam$ (for a $1-\gam$ confidence level)\;
        \item $d_0$ competing decoys\;
        \item competition parameters $c = i_c/(d_0+1)$ and $\lambda = i_\lambda/(d_0+1)$ for $i_c, i_\lambda \in \{1, \ldots, d_0\}$\;
        \item a list of labels $L_i \in \{-1, 0, 1\}$ where $1$ indicates a target win, $-1$ a decoy win and $0$ an uncounted hypothesis (sorted so that the corresponding scores $W_i$ are decreasing: $W_1\ge W_2\ge\dots\ge W_m$)\;
    \end{alglist}
}
\KwOut{
    \begin{alglist}
        \item an index $k_{\text{r-FDP-SDm}}$ specifying that target wins in the top $k_{\text{r-FDP-SDm}}$ hypotheses are discoveries\;
    \end{alglist}
}
\vspace{-0.5em}
\dottedhfill\\
\setstretch{1.2}
$R := (1 - \lambda) / (c + 1 - \lambda)$\;
$i_0 := \max\{1, \lceil \left(\lceil \log_{1-R}(\gamma) \rceil - 1 \right) / \alp \rceil\}$\;
\For{$i=i_0$ \KwTo $m$} {
    $D_i$ be the number of $-1$'s in $\{L_1,\ldots,L_i\}$\;
    $\delta_i := \max\left\{d\in\{0,1,\dots,i\}\,:\,F_{B(\lfloor(i-d)\alpha\rfloor + 1 + d,R)}(d) \le \gam  \right\}$ where $F_{B(n,p)}$ denotes the CDF of a binomial $B(n,p)$ RV\;
}
\textbf{Set} $i := i_0$ and $\delta_{i_0 - 1} := -1$ and $\bar\delta_{i_0 - 1} = 0$\;
\While{$i\leq m$}{
    $k_0 := \lfloor(i-\delta_i)\cdot\alpha\rfloor + 1$\;
    $k_1 := \lfloor((i - (\delta_i + 1))\cdot \alpha\rfloor + 1$\;
    $p_0 := F_{B(k_0+\delta_i, R)}(\delta_i)$\;
    $p_1 := F_{B(k_1+\delta_i+1, R)}(\delta_i+1)$\;
    $w_i := (p_1 - \gam)/(p_1 - p_0)$\;
    \eIf{$\bar\delta_{i-1} = \delta_i + 1$}{
            $\bar\delta_i \eqq \bar\delta_{i-1}$\;
        }{
			\eIf{$\delta_i > \delta_{i-1}$}{
				$w' \eqq w_i$
			}{
				$w' \eqq w_i/w_{i-1}$\;
			}
        }
    \textbf{Randomly set} $\bar\delta_i \eqq \delta_i$ or $\bar\delta_i \eqq \delta_i + 1$ with probabilities $w'$ and $1 - w'$ respectively\;
    \eIf{$D_i \leq \bar\delta_i$}{
        $i\mapsto i+1$\;
    }{
        {\bf break}\;
    }
}
\eIf{$D_{i_0} \leq \bar\delta_{i_0}$}{
    \Return $k_{\text{r-FDP-SDm}} := i-1$\;
}{
    \Return $k_{\text{r-FDP-SDm}} := 0$\;
}
\end{algorithm}

\clearpage

\subsection{Proof of Theorem \ref{thm:FDP-SD}}
\label{supsec:proof_thm_FDP-SD}

\begin{proof}
	
To simplify notation, let $\tau=k_{\text{FDP-SD}}$ and $\delta_i = \delta_{\alpha, \gam}(i)$.
Denote the number of target-winning false nulls (i.e., correct target discoveries) among the top $i$ scores by
$$A_i^t := T_i-N_i^t,$$ 
and the number of those which are decoy-winning by
$$A_i^d:=\sum_{j=1}^i \mathbbm{1}_{\{L_j = -1,j\notin N\}}.$$
Let $A_i := A_i^d + A_i^t$ be the total number of false nulls among the top $i$ scores and $I_A:=\{1,\dots,m\}\setminus N$ be the set of non-null indices.

By the law of total probability, it is enough to prove that FDP-SD controls the FDP for a fixed collection of winning scores and fixed positions and labels of the false nulls. Thus, assume that it is given $W_i$ (in decreasing order: $W_1 \geq \dots \geq W_m$) and $\{L_i : i \in I_A\}$. Note that, by Assumption \ref{fund_assump}, given such information, the true null labels $\{L_i\,:\,i\in N\}$ are i.i.d. uniform $\pm1$ RVs.

Let 
$$
M := \{i\in\mathbb{N}:i_0 \leq i\leq m \mbox{ and } i-A_i^t-\delta_i>\alpha(i-\delta_i)\}.
$$
If $M \neq \emptyset$, define $j := \min M$; otherwise, set $j:=\infty$.
Since the labels and positions of the false nulls are fixed, we have that $A_i^t$ and $A_i^d$ are fixed, and therefore, so too is $j$.
We first consider the case where $j$ is finite.
\begin{Lemma}
\label{lemmma1_for_SD}
Let $Q_\tau$ be the FDP in the list of discoveries resulting from FDP-SD. If $Q_\tau > \alpha$ and $j < \infty$ then $\tau\geq j$.
\end{Lemma}
\begin{proof}
If $Q_\tau>\alpha$, then $\tau \geq i_0 > 0$. As the procedure ended on index $\tau$, either $\tau = m \ge j$ and the conclusion follows,
or $\tau<m$ and $D_{\tau+1}>\delta_{\tau+1}$ whilst $D_\tau\le\del_\tau$. Since $\del_i\le\del_{i+1}$, it follows that
$D_{\tau}\leq \delta_\tau \leq \delta_{\tau + 1}<D_{\tau+1}$.
But all terms in this string of inequalities are integers and $0\leq D_{\tau+1}-D_\tau \leq 1$. Therefore, $D_\tau = \delta_\tau = \delta_{\tau + 1}$.
In particular, if $Q_\tau>\alpha$ then
\begin{equation*}
        \alpha < Q_\tau = \frac{\tau - A_\tau^t-D_\tau}{\tau-D_\tau}
         = \frac{\tau - A_\tau^t-\delta_\tau}{\tau-\delta_\tau} ,
\end{equation*}
giving $\tau\in M$, and thus, $\tau\ge j=\min M$.
\end{proof}
\begin{Remark}
	\label{rem:tau_lt_m}
We note from the proof that if $\tau<m$ and $Q_\tau > \alpha$ then
$\tau\in M \neq \emptyset$. In particular, if $j = \infty$ and $Q_\tau > \alpha$ then it must be the case that $\tau = m$.
\end{Remark}
Assuming that $j<\infty$, it follows that $\{Q_\tau >\alpha\}\subseteq \{\tau\geq j\}$, and since
\[
\{\tau\geq j\} = \bigcap_{i = i_0}^j  \{D_i \leq \delta_i\} \subseteq \{D_j \leq \delta_j\},
\]
it suffices to show $P(D_j \leq \delta_j)\leq \gamma$.

Still assuming $j<\infty$, $(j-\delta_j)\alpha<j-A_j^t-\delta_j$ by definition. In particular,
\begin{equation*}
           k(\delta_j)  := \lfloor (j-\delta_j)\alpha \rfloor + 1
        \leq j - A_j^t - \delta_j,
\end{equation*}
and therefore, $k(\delta_j)+\delta_j\leq j - A_j^t$.

By definition of $\delta_j$,
$$
F_{B(k(\delta_j)+\delta_j,\frac{1}{2})}(\delta_j) \leq \gamma.
$$
The CDF of a $\operatorname{Binomial}(n,p)$ decreases with $n$, so the last two inequalities imply that
\begin{equation}
\label{eq_sd1}
F_{B(j - A_j^t,\frac{1}{2})}(\delta_j)
    \leq 
    F_{B(k(\delta_j)+\delta_j,\frac{1}{2})}(\delta_j)
    \leq \gamma.
\end{equation}

Note that the number of true nulls among the top $j$ hypotheses is the fixed quantity $j - A_j$. Recall, by Assumption \ref{fund_assump}, the labels of those true nulls are i.i.d. uniform $\pm1$ RVs.
Hence, $N^d_j$, the number of decoy-winning true nulls in the top $j$ scores, follows a $\operatorname{Binomial}(j-A_j,\frac{1}{2})$ distribution, and thus,
\begin{equation}
	\label{eq:adhoc1}
    P(D_j\leq \delta_j) = P(N^d_j+A_j^d \leq \delta_j) = P(N^d_j \leq \delta_j - A_j^d) = F_{B(j-A_j,\frac{1}{2})}(\delta_j - A_j^d).
\end{equation}
\begin{Remark}
	\label{rem:binom}
Note that for $k,l,m,n\in\mathbb{N}$ with $l\geq n$,
$$F_{B(m,p)}(k)\leq F_{B(n+m,p)}(l+k).$$
Indeed, if there are $\le k$ successes in the first $m$ trials then there will be $\le n+k \le l+k$ successes in all $n+m$ trials.
\end{Remark}
From Lemma \ref{lemmma1_for_SD}, \eqref{eq_sd1}, \eqref{eq:adhoc1} and the last remark, we conclude that
\begin{align*}
    P(Q_\tau >\alpha) &\le P(D_j\leq \delta_j) \\
& =  F_{B(j-A_j,\frac{1}{2})}(\delta_j - A_j^d)\\
& =  F_{B(j - A_j^t - A_j^d,\frac{1}{2})}(\delta_j - A_j^d)\\
& \leq  F_{B(j - A_j^t,\frac{1}{2})}(\delta_j)\\
& \leq \gamma,
\end{align*}
thus establishing that, when $j < \infty$, FDP-SD controls the FDP with confidence $1 - \gamma$.

Next, consider the case when $M=\emptyset$ (equivalently, $j = \infty$). As noted in Remark \ref{rem:tau_lt_m}, if $\tau < m$ and $Q_\tau > \alpha$,
then $\tau \in M$, resulting in a contradiction. Therefore, when $M=\emptyset$ and $Q_\tau > \alpha$, $\tau$ must equal $m$,
which in turn implies that $D_m \leq \delta_m$. Hence,
$$
\{Q_\tau>\alpha, M=\emptyset\} = \{D_m \leq \delta_m, Q_\tau>\alpha, \tau=m, M=\emptyset\} \subseteq  \{D_m \leq \delta_m, Q_m>\alpha, M=\emptyset\}.
$$

Since
$$
Q_m = \frac{m-A_m^t-D_m}{m-D_m},
$$
and $Q_m>\alpha$, it follows that $(m-D_m)\alpha<m-A_m^t-D_m$ and
$$
\lfloor (m-D_m)\alpha\rfloor + 1\leq m-A_m^t-D_m.
$$
Therefore, assuming $M=\emptyset$,
\begin{equation}
	\label{eq:adhoc2}
\begin{aligned}
    P(Q_\tau > \alp)
    \le\,   & P(D_m\leq\delta_m,\, Q_m > \alpha)  \\
    \le\,  & P\left[D_m\leq\delta_m,\, \lfloor (m-D_m)\alpha\rfloor + 1 \leq m - A_m^t - D_m\right].
\end{aligned}
\end{equation}
With
$$k(m,d) := \lfloor(m-d)\alpha\rfloor + 1,$$
denote
$$p(m,d) := F_{B(k(m,d)+d,\frac{1}{2})}(d).$$
To express the event $D_m\le\del_m$ in terms of $p(m,d)$, the following lemma is necessary.

\begin{Lemma}
	\label{Lem:pmd_is_mono}
Let $d$ and $d'$ be integers such that $0\leq d \leq d' \leq m$. Then, $p(m,d) \leq p(m,d')$.
\end{Lemma}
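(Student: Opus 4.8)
The plan is to reduce to the case of consecutive integers and then track how the two appearances of $d$ in $p(m,d)$ interact: the threshold $d$ inside the CDF and the number of trials $n(d):=k(m,d)+d=\lfloor(m-d)\alpha\rfloor+1+d$. By transitivity it suffices to prove $p(m,d)\le p(m,d+1)$ for every $d$ with $0\le d<m$, since the general inequality $p(m,d)\le p(m,d')$ then follows by telescoping over consecutive steps from $d$ up to $d'$.

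First I would quantify how the number of trials changes under a unit increment of $d$. Writing $n(d+1)-n(d)=1-\bigl(\lfloor(m-d)\alpha\rfloor-\lfloor(m-d-1)\alpha\rfloor\bigr)$ and noting that the two floor arguments differ by exactly $\alpha\in(0,1)$, the bracketed floor difference must lie in $\{0,1\}$. Hence $n(d+1)\in\{n(d),\,n(d)+1\}$: incrementing $d$ raises the number of Bernoulli trials by either zero or one, while always raising the threshold by exactly one.

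This splits the consecutive step into two cases. If $n(d+1)=n(d)=:n$, then $p(m,d)=F_{B(n,1/2)}(d)\le F_{B(n,1/2)}(d+1)=p(m,d+1)$ simply because a CDF is non-decreasing in its argument. If instead $n(d+1)=n(d)+1$, I would invoke Remark~\ref{rem:binom}: its inequality, applied with a base binomial of $n$ trials and threshold $d$, one extra trial, and a threshold shift of one (so its hypothesis $l\ge n$ holds with equality, $1\ge1$), yields $F_{B(n,1/2)}(d)\le F_{B(n+1,1/2)}(d+1)$, i.e. $p(m,d)\le p(m,d+1)$. Combining the two cases closes the consecutive step and hence the lemma.

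The only genuinely delicate point is the first step — establishing that $n(d)$ can grow by at most one per unit increase in $d$ — and this is precisely where the hypothesis $\alpha<1$ is used; everything afterward is a direct appeal to monotonicity of the binomial CDF in its threshold and to the coupling inequality already recorded in Remark~\ref{rem:binom}, so I expect no further obstacles.
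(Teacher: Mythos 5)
Your proof is correct and follows essentially the same route as the paper's: both arguments reduce to showing that the number of trials $k(m,d)+d$ increases by no more than the threshold $d$ does, and both then invoke the coupling inequality of Remark~\ref{rem:binom}. The only difference is cosmetic --- you telescope over unit increments of $d$ (with a two-case split, using plain CDF monotonicity when the trial count is unchanged), whereas the paper handles general $d\le d'$ in one step via the bound $\lfloor y\rfloor-\lfloor x\rfloor\le\lceil y-x\rceil$.
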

\begin{proof}
Note that for $x,y\in\R$, $\lfloor y\rfloor -\lfloor x\rfloor \le \lceil y-x\rceil$, hence for $d\leq d'$
\begin{align*}
    0 \,\le \,k(m,d) - k(m,d')
    & = 
    \lfloor(m-d)\alpha\rfloor -\lfloor(m-d')\alpha\rfloor\\
    & \leq 
    \lceil(m-d)\alpha -(m-d')\alpha\rceil\\
    & =
    \lceil(d'-d)\alpha\rceil\\
    & \leq
    d' - d.
\end{align*}
Therefore,
$$
0
\leq
[k(m,d')+d'] - [k(m,d)+d]
\leq
d' - d.
$$
It follows from Remark \ref{rem:binom} that
\begin{align*}
    p(m,d)
    & = 
    F_{B(k(m,d)+d,\frac{1}{2})}(d) \\
    & \leq
    F_{B(k(m,d')+d',\frac{1}{2})}(d')\\
    & = 
    p(m,d').
\end{align*}
\end{proof}
\begin{Corollary}
For $d\in\N$ with $d\le m$,
$p(m,d)\leq p(m,\delta_m)$ if and only if $d \leq \delta_m$, if and only if $p(m,d)\leq \gamma$.
\end{Corollary}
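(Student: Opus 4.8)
The plan is to establish the three-way equivalence among the conditions $p(m,d)\le p(m,\delta_m)$, $d\le\delta_m$, and $p(m,d)\le\gamma$ by combining two facts already in hand: the monotonicity of $d\mapsto p(m,d)$ proved in Lemma~\ref{Lem:pmd_is_mono}, and the defining property of $\delta_m=\delta_{\alpha,\gamma}(m)$ from \eqref{eq:delta_definition} as the \emph{largest} $d\in\{-1,0,\dots,m\}$ with $p(m,d)\le\gamma$. First I would record the two immediate consequences of this definition: $p(m,\delta_m)\le\gamma$, and for every integer $d$ with $\delta_m<d\le m$ we have $p(m,d)>\gamma$, since otherwise $d$ would belong to the set whose maximum is $\delta_m$, contradicting maximality. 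Note also that this set is never empty because $p(m,-1)=F_{B(k(m,-1)-1,1/2)}(-1)=0\le\gamma$, so $\delta_m\ge-1$ is well defined and the inequality $p(m,\delta_m)\le\gamma$ always holds.

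Next I would prove $d\le\delta_m \iff p(m,d)\le\gamma$ directly. For the forward implication, $d\le\delta_m$ together with $d\ge0$ lets me apply Lemma~\ref{Lem:pmd_is_mono} to get $p(m,d)\le p(m,\delta_m)\le\gamma$. For the reverse implication, if $p(m,d)\le\gamma$ then $d$ lies in the set defining $\delta_m$, so $d\le\delta_m$ by maximality; here no monotonicity is needed.

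Finally I would connect these to the condition $p(m,d)\le p(m,\delta_m)$. If $d\le\delta_m$, then Lemma~\ref{Lem:pmd_is_mono} gives $p(m,d)\le p(m,\delta_m)$ at once. Conversely I argue by contraposition: if $d>\delta_m$ (with $d\le m$), then the observation recorded above yields $p(m,d)>\gamma\ge p(m,\delta_m)$, so $p(m,d)\le p(m,\delta_m)$ fails. Chaining these implications closes the loop among the three conditions.

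There is no genuine obstacle here: the content is carried entirely by Lemma~\ref{Lem:pmd_is_mono} and the definition of $\delta_m$. The only points requiring care are the edge cases, namely ensuring $d\ge0$ so that the monotonicity lemma (stated for $0\le d\le d'\le m$) applies, and handling $\delta_m=-1$, in which case all three conditions are simultaneously false for every admissible $d\ge0$ and the equivalence holds vacuously.
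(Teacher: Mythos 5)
Your proof is correct and follows essentially the same route as the paper, which simply notes that the equivalences follow from Lemma~\ref{Lem:pmd_is_mono} together with the definition of $\delta_m$ as the maximum of $\{d\,:\,p(m,d)\le\gamma\}$. Your additional care with the edge cases ($d\ge0$ for the monotonicity lemma and the vacuous case $\delta_m=-1$) is sound but not a departure from the paper's argument.
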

\begin{proof}
The equivalences follow immediately from the last lemma and the fact that, by definition,
$\delta_m = \max\{d\in\{-1,0,\dots,m\}\,:\, p(m,d) \leq \gamma\}.$
\end{proof}

Thus, continuing from \eqref{eq:adhoc2},
$$
P(Q_\tau > \alpha)
\leq
P\left[p(m,D_m)\leq \gamma\,,\, k(m,D_m)+D_m\leq m- A_m^t\right].
$$
Note that for $n,d\in\N$ such that $n\geq k(m,d)+d$, 
$$
p(m,d)  = F_{B(k(m,d)+d,\frac{1}{2})}(d) \ge F_{B(n,\frac{1}{2})}(d).
$$
So in this case,
$$
\{p(m,d)\le \gamma \} \subseteq \{F_{B(n,\frac{1}{2})}(d)\le \gamma \}.
$$
It follows that with $n = m - A_m^t$ and $d=D_m$
\begin{align*}
    P(Q_\tau > \alpha)
    & \leq P\big[p(m,D_m) \le \gamma \,,\, k(m,D_m)+D_m\leq m-A_m^t\big]\\
    % & = P\big(F_{B(k(m,D_m)+D_m,\frac{1}{2})}(D_m) \le \gamma \,,\, k(m,D_m)+D_m\leq m-A_m^t\big)\\
    & \leq P\big[F_{B(m - A_m^t,\frac{1}{2})}(D_m) \le \gamma \,,\, k(m,D_m)+D_m\leq m-A_m^t\big]\\
    & \leq P\big[F_{B(m - A_m^t,\frac{1}{2})}(D_m) \le \gamma \big].
\end{align*}
By Remark \ref{rem:binom},
$$
F_{B(m-A_m^t-A_m^d,\frac{1}{2})}(D_m-A_m^d) \leq F_{B(m-A_m^t,\frac{1}{2})}(D_m).
$$
Therefore,
$$
P(Q_\tau > \alp) \leq P\big[ F_{B(m-A_m^t-A_m^d,\frac{1}{2})}(D_m - A_m^d) \le \gamma  \big].
$$
Recall that $A_m^t$ and $A_m^d$ are fixed.
Hence, by assumption, $X:=D_m - A_m^d$ possesses a binomial $B(m-A_m^t-A_m^d,\frac{1}{2})$ distribution, and $F_X := F_{B(m-A_m^t-A_m^d,\frac{1}{2})}$ is its CDF.
Thus,
$$
P(Q_\tau > \alpha) \leq P(F_X(X)\le \gamma ).
$$
Since, for any random variable $X$, $F_X(X)$ stochastically dominates the uniform (0,1) distribution, it follows that (with $U \sim \text{Unif}(0,1)$),
$$
P(Q_\tau > \alpha) \leq P(F_X(X)\le \gamma ) \leq P(U \le \gamma ) = \gamma .
$$
Hence, even when $j = \infty$, FDP-SD controls the FDP with confidence $1-\gamma$, concluding the proof of Theorem \ref{thm:FDP-SD}.

\end{proof}

\subsection{Simulations of the Spectrum Identification Problem}
\label{supsec:model_simulation}

The spectrum-ID model was presented in Section \ref{sec:background}. Here we used a variant of this model described in \cite{keich:progressive}
where we model the number of candidate peptides a spectrum is compared with, $n$: in practice each spectrum is 
only compared against a subset of peptides in the DB whose mass is within the measurement tolerance of the precursor mass
associated with the spectrum. In this case the $n$ candidate target peptides of a native spectrum are its generating peptide and $n-1$ random peptides,
so $Y_i$ is the best score among $n-1$ such random matches whereas $\til Z_i$ is the best score among $n$ random matches.
It follows that Assumption \ref{fund_assump} is only approximately valid: for native spectra there is a slightly larger chance
a true null will be a decoy win (which creates a slightly conservative --- and hence not overly concerning  --- bias).

We generated simulated instances of the spectrum-ID problem using both calibrated and uncalibrated scores as described next.

\subsubsection{Using Calibrated Scores}

For each of the following nine parameter combinations we generated 40K simulated instances of the spectrum-ID problem
by independently drawing the $X_i$, $Y_i$ and $\til Z_i$ scores for $i=1,\dots,m$, where $m$ is the number of spectra.
We varied $m$ among 500, 2k, and 10k and we varied $\pi_0$, the proportion of foreign spectra, among 0.2, 0.5 and 0.8.
For a native spectrum we drew $X_i$ from a $1-\operatorname{Beta}(a,b)$ distribution (we used $a=0.05$ and $b=10$), and $Y_i$
from a $1-\operatorname{Beta}(1,n-1)$ (we used $n=100$ candidates), whereas for a foreign spectrum we set $X_i=0$.
The $Y_i$ scores for all foreign spectra as well as all the $\til Z_i$ scores were drawn from a  $1-\operatorname{Beta}(1,n)$ distribution
(with the latter ensuring the scores are calibrated).
We then applied TDC, FDP-SD, and FDP-KRB with FDR/FDP thresholds of $\alpha$ = 1\%, 5\%, and 10\%, and confidence levels
$100(1-\gam)$ = 95\% and 99\%.

\subsubsection{Using Uncalibrated Scores}

We generated data with uncalibrated scores as described in \cite{keich:progressive} by associating with each spectrum
a pair of location and scale parameters randomly drawn from a pool of such parameters estimated on a yeast dataset.
We then randomly drew for each spectrum its associated $X_i$, $Y_i$ and $\til Z_i$ scores as in the calibrate case and then we replaced each one
with the corresponding quantile of the Gumbel distribution with the spectrum-specific location and scale parameters.
That is, the inverse of the appropriate Gumbel CDF was applied to each of the three scores. The rest remains the same as in
the calibrated score case.

\subsection{Peptide Detection / Analysis of the ISB18 Dataset}
\label{supsec:isb18_analysis}

We used the same methodology as described in \cite{emery:multipleRECOMB}
for detecting peptides in the ISB18 data set~\cite{klimek:standard}.
Recapped next, this process generated 900 sets of paired target and decoy scores assigned to each peptide in our database.

As in spectrum ID, we first use Tide~\cite{diament:faster} to find for each spectrum its best matching peptide
in the target database as well as in the decoy peptide database.
We then assign to the $i$th target peptide the score, $Z_i$,
which is the maximum of all the PSM scores that were optimally matched to this peptide.
The corresponding decoy score $\til Z_i$ is defined analogously.
We repeat this process using 9 different aliquots, or spectra sets, each paired with 100 randomly shuffled decoys databases
creating 900 sets of paired target and decoy scores to which we applied TDC and FDP-SD
with FDR/FDP thresholds of $\alpha$ = 5\% and confidence a level $100(1-\gam)$ = 95\%.

The ISB18 data set is derived from a series of experiments using an
18-protein standard protein mixture (\url{https://regis-web.systemsbiology.net/PublicDatasets},~\cite{klimek:standard}).
We use 10 runs carried out on an Orbitrap (\texttt{Mix\_7}).

Searches were carried out using the Tide search engine \cite{diament:faster}
as implemented in Crux~\cite{park:rapid}. The peptide database included
fully tryptic peptides, with a static modification for cysteine carbamidomethylation
(\texttt{C+57.0214}) and a variable modification allowing up to six
oxidized methionines (\texttt{6M+15.9949}). Precursor window size
was selected automatically with Param-Medic~\cite{may:param-medic}.
The XCorr score function was employed using
a fragment bin size selected by Param-Medic.

The ISB18 is a fairly unusual dataset in that it was generated using
a controlled experiment, so the peptides that generated the spectra
could have essentially only come from the 18 purified proteins used
in the experiment. We used this dataset to get  feedback on how well our
methods control the FDR/FDP, as explained next.

The spectra set was scanned against a target database that included,
in addition to the 463 peptides of the 18 purified proteins, 29,379
peptides of 1,709 \emph{H.~influenzae} proteins (with ID's beginning
with \texttt{gi|}). The latter foreign peptides were added in order
to help us identify false positives: any foreign peptide reported
is clearly a false discovery. Moreover, because the foreign peptides
represent the overwhelming majority of the peptides in the target
database (a ratio of 63.5 : 1), a native ISB18 peptide reported is
most likely a true discovery (a randomly discovered peptide is much
more likely to belong to the foreign majority). Taken together, this
allows us to gauge the actual FDP for in each reported discovery list.

The 87,549 spectra of the ISB18 dataset were assembled from 10 different
aliquots, so in practice we essentially have 10 independent
replicates of the experiment. However, the last aliquot had only 325
spectra that registered any match against the combined target database,
compared with an average of over 3,800 spectra for the other 9 aliquots,
so we left it out when we independently applied our analysis to each
of the replicates. The spectra set of each of those 9 aliquots was scanned against
the target database paired with each of 100 randomly drawn decoy databases
yielding a total of 900 pairs of target-decoy sets of scores.

\vspace{2em}
\subsection{Supplementary Figures}

\begin{figure}
\centering %
\begin{tabular}{lll}
\hspace{-10ex}
$m=500$  & $m=$2K & $m=$10K \tabularnewline
\hspace{-10ex}
\includegraphics[width=2.5in]{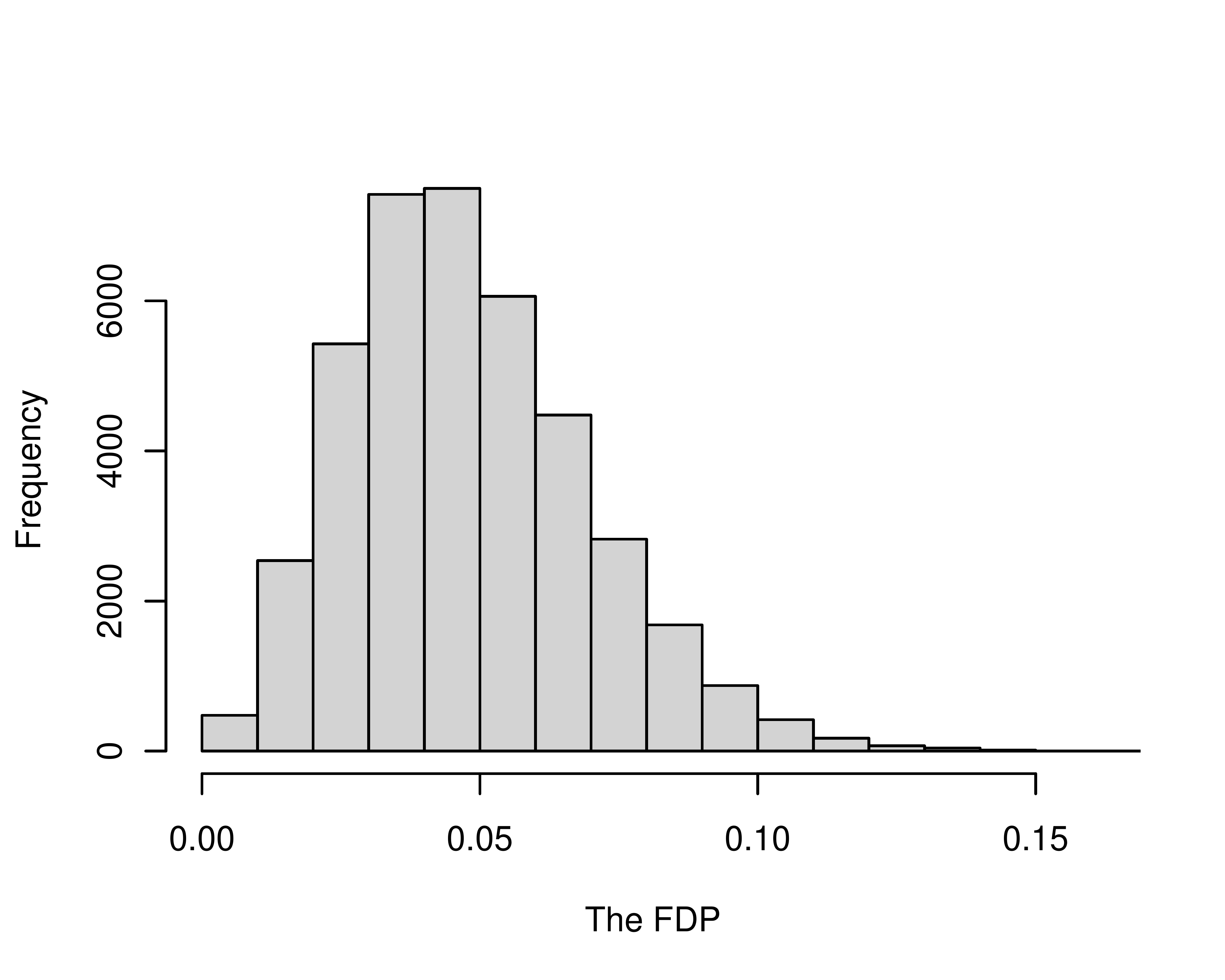}  & \includegraphics[width=2.5in]{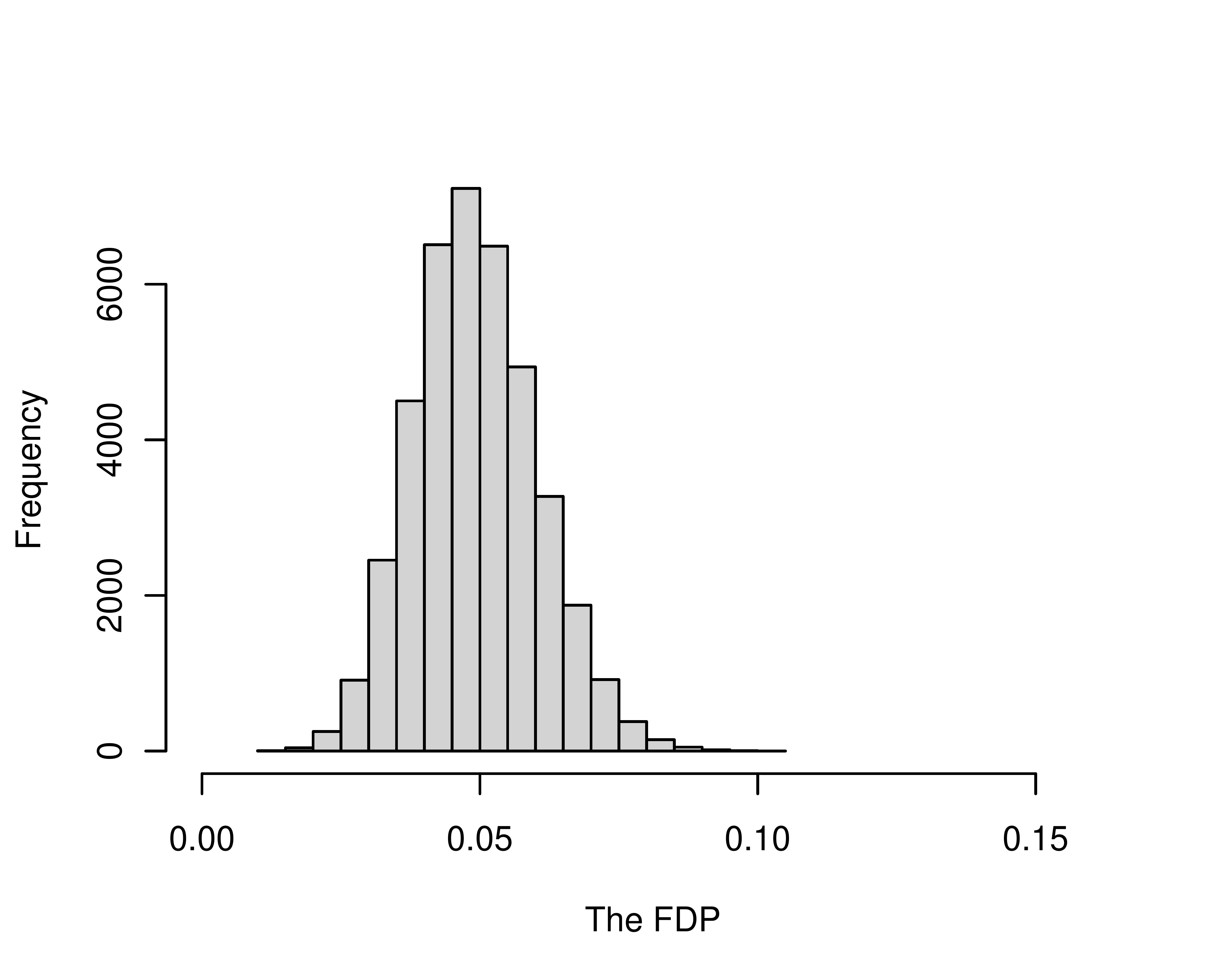}  & \includegraphics[width=2.5in]{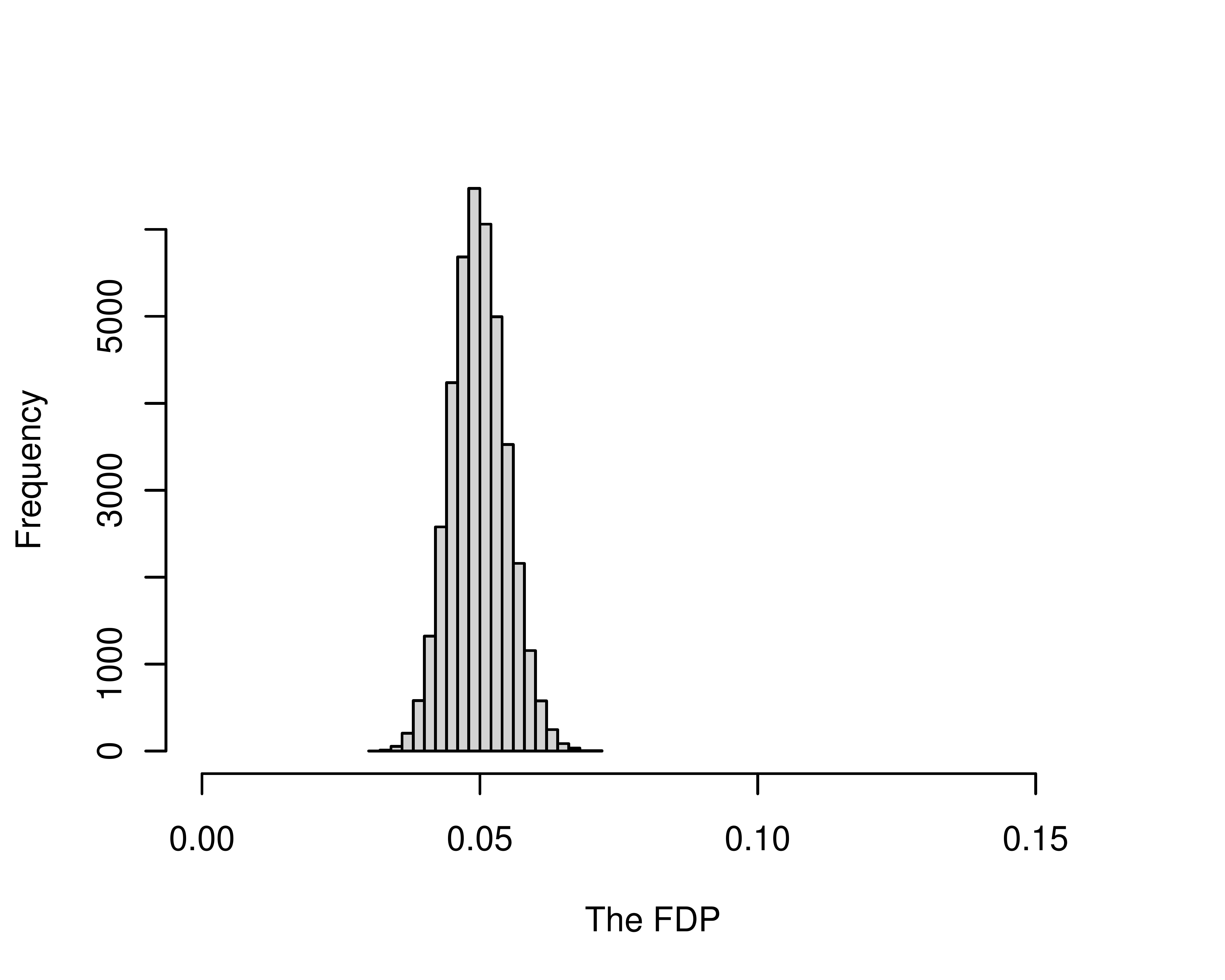} \tabularnewline
\hspace{-10ex}
\includegraphics[width=2.5in]{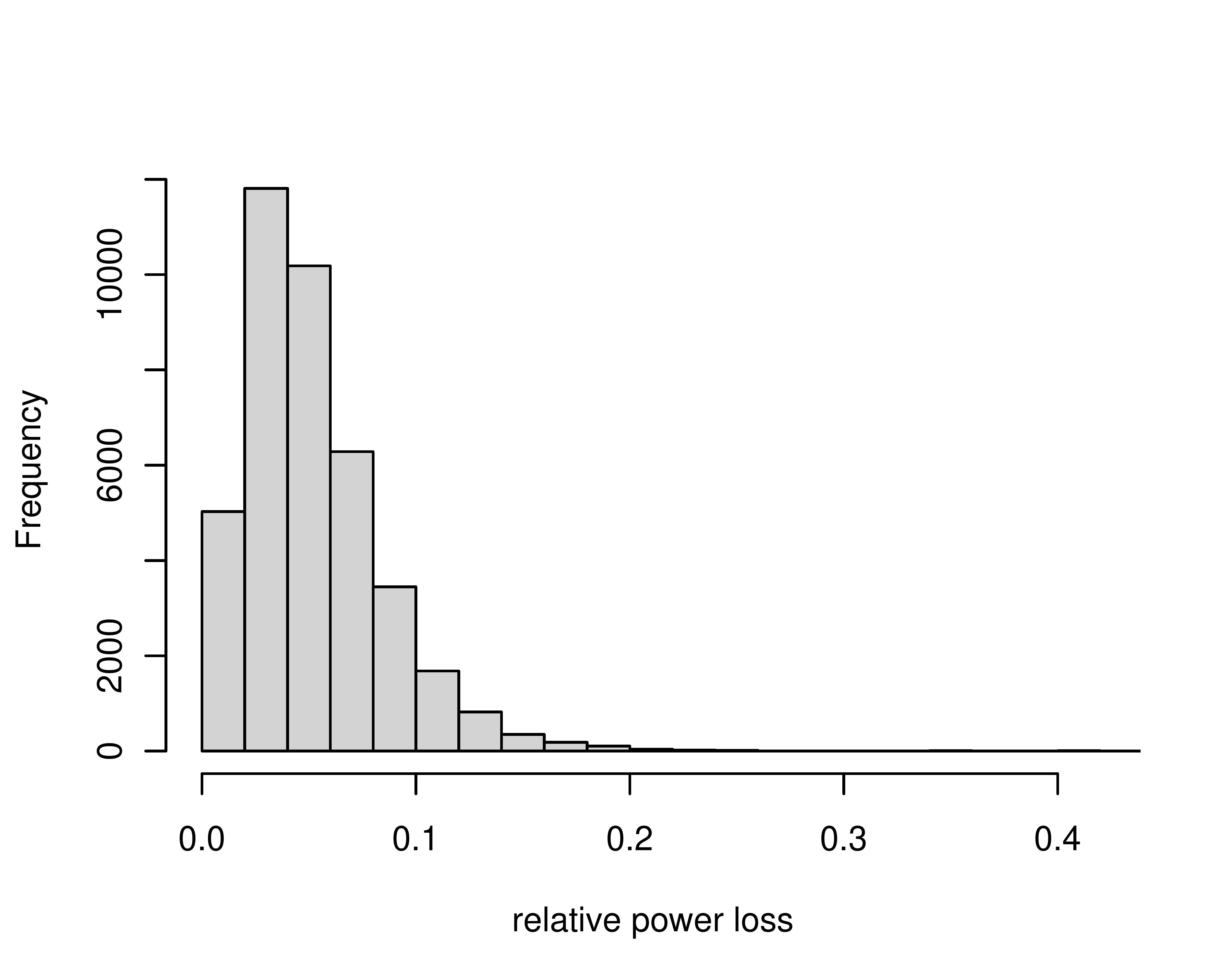}  & \includegraphics[width=2.5in]{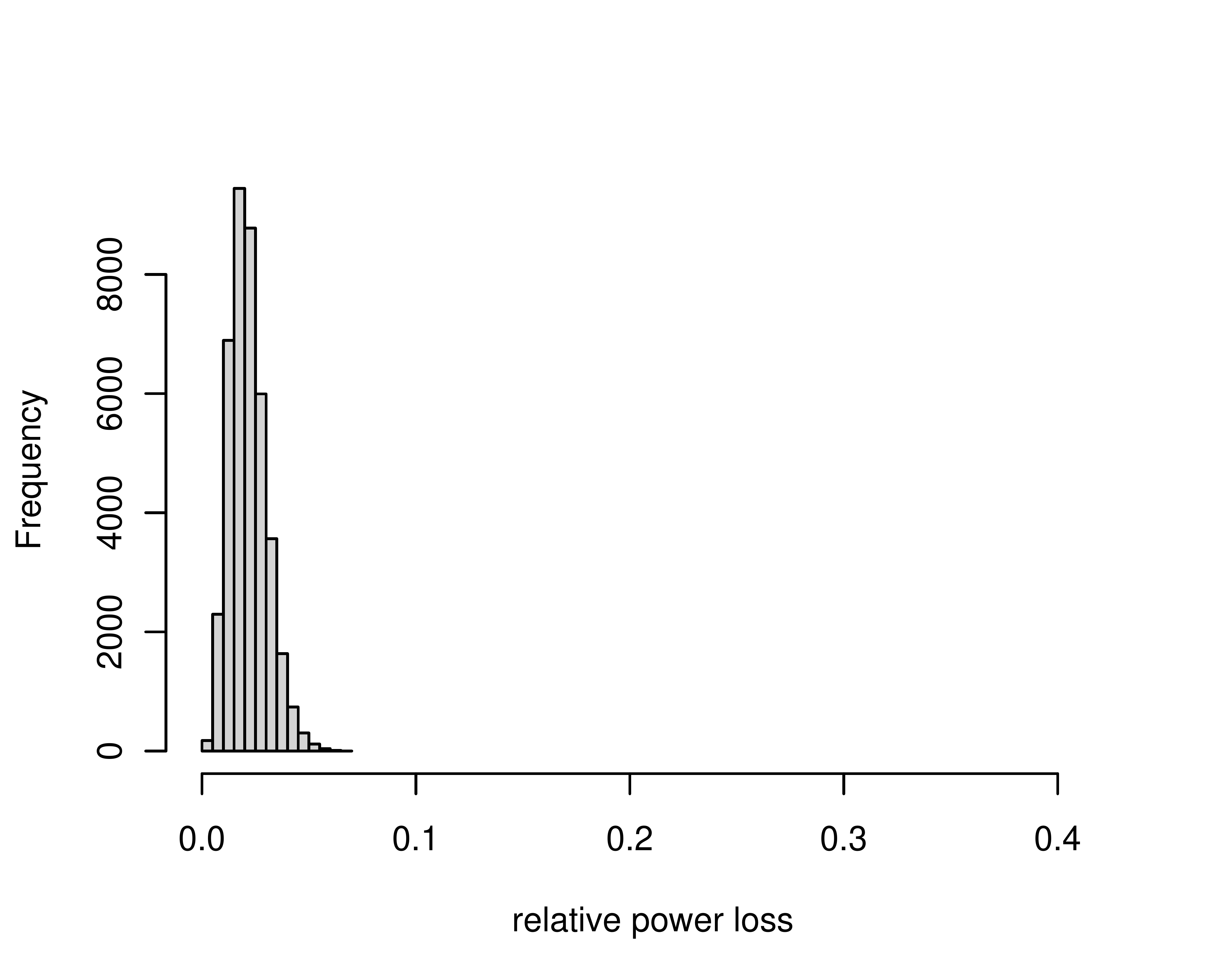}  & \includegraphics[width=2.5in]{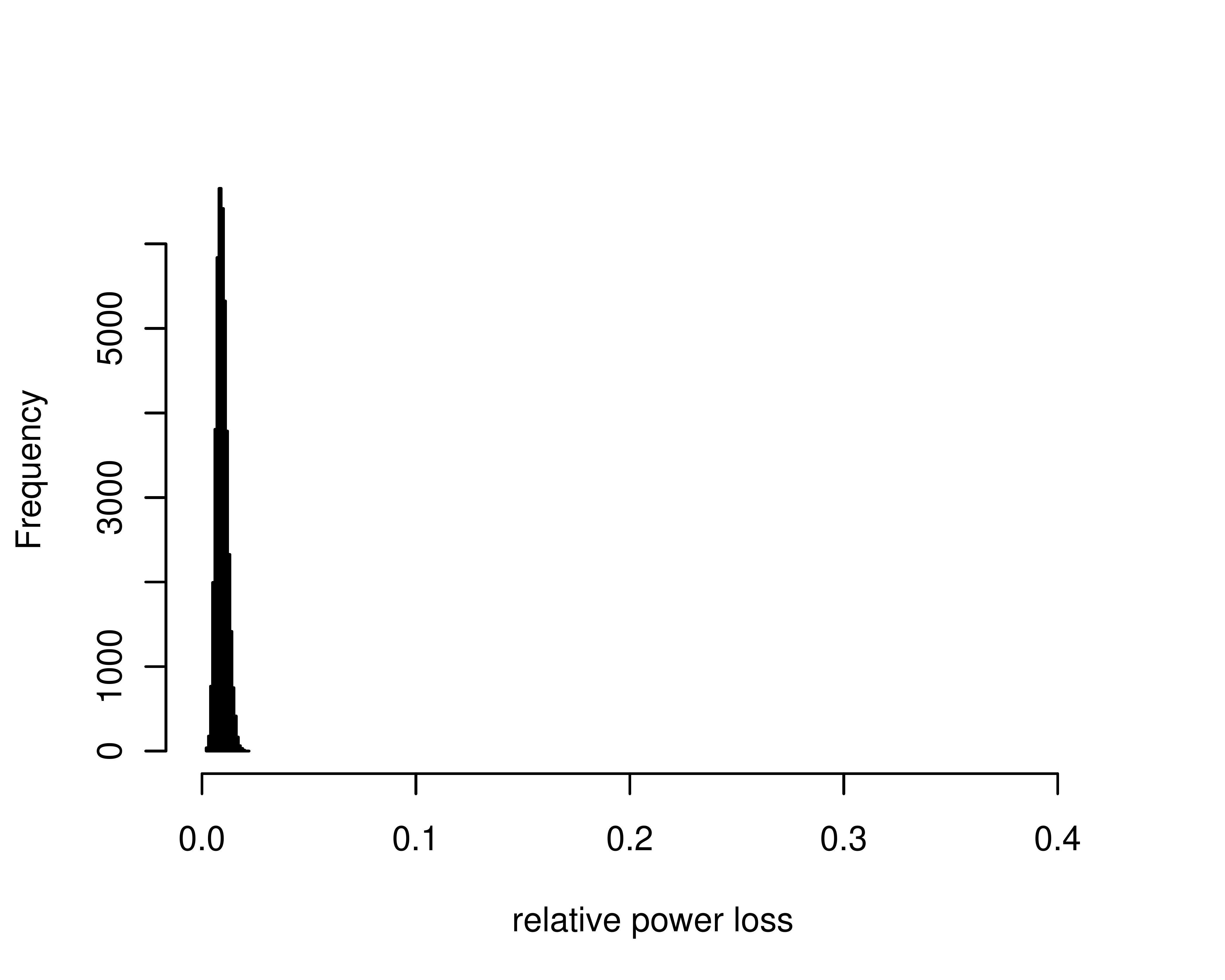} \tabularnewline
\end{tabular}
\caption{\textbf{Varying $m$ in simulated sets of spectrum-ID.} We increase $m$ from 500 (left column) through 2K (middle column)
to 10K (right column) while looking at TDC's FDP (top row), the relative loss of power (in terms of the number of correct discoveries) when using FDP-SD compared with TDC (middle row).
The other parameters were kept constant: $\alp=0.05,\gam=0.05,\pi_0=0.5$.
\label{supfig:vary_m}}
\end{figure}

\begin{figure}
\centering %
\begin{tabular}{lll}
\hspace{-10ex}
$\pi_0=0.8$  & $\pi_0=0.5$ & $\pi_0=0.2$ \tabularnewline
\hspace{-10ex}
\includegraphics[width=2.5in]{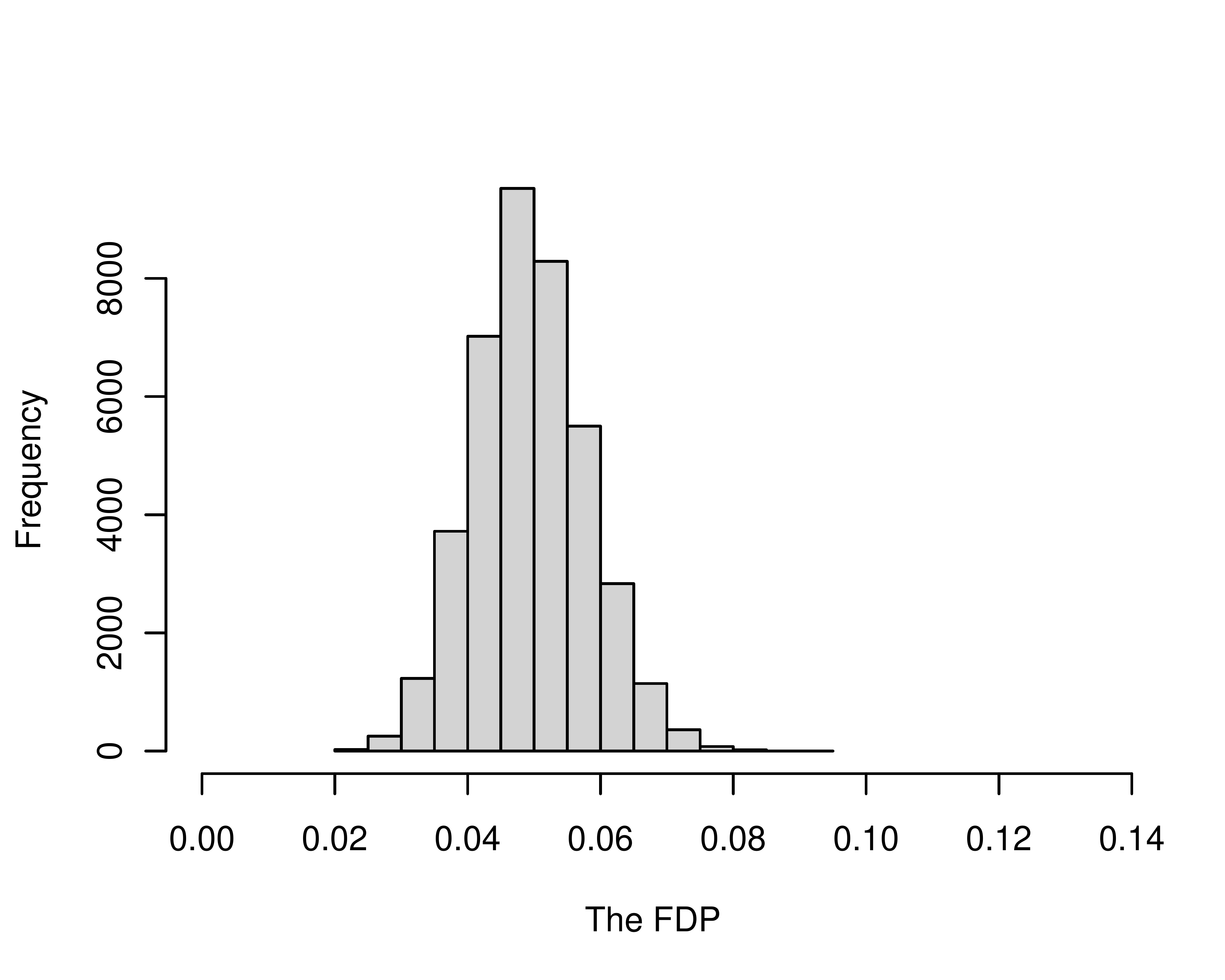}  & \includegraphics[width=2.5in]{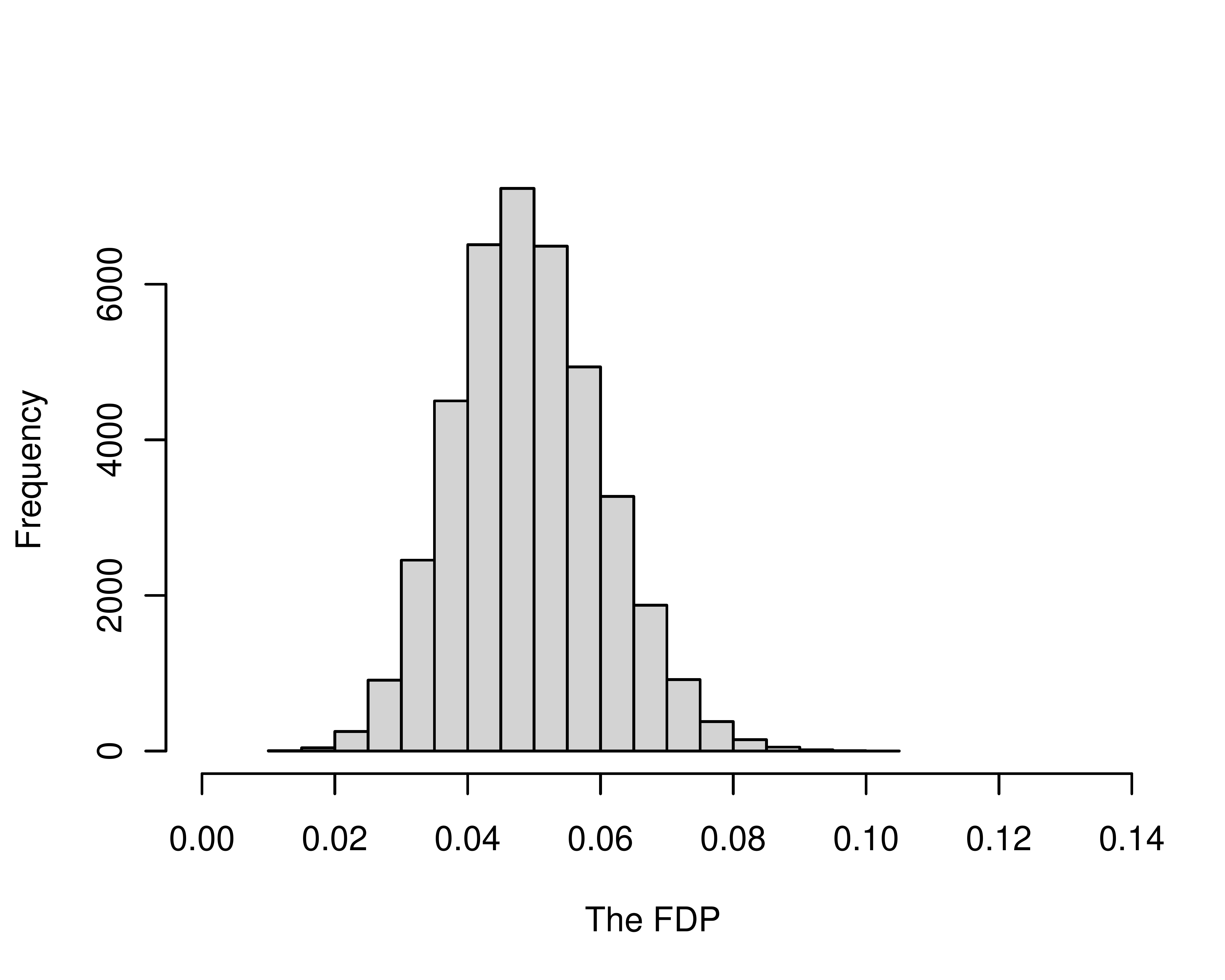}  & \includegraphics[width=2.5in]{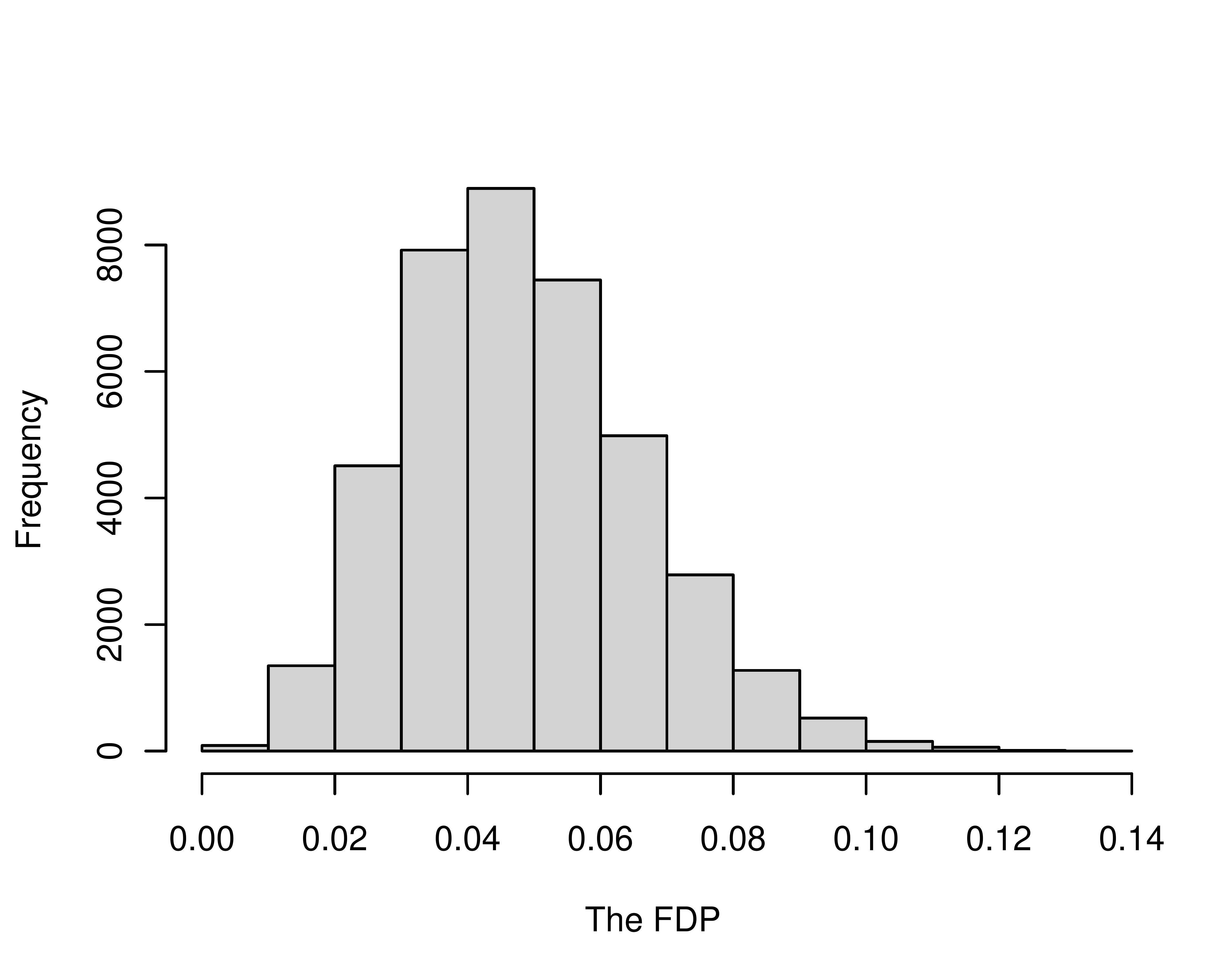} \tabularnewline
\hspace{-10ex}
\includegraphics[width=2.5in]{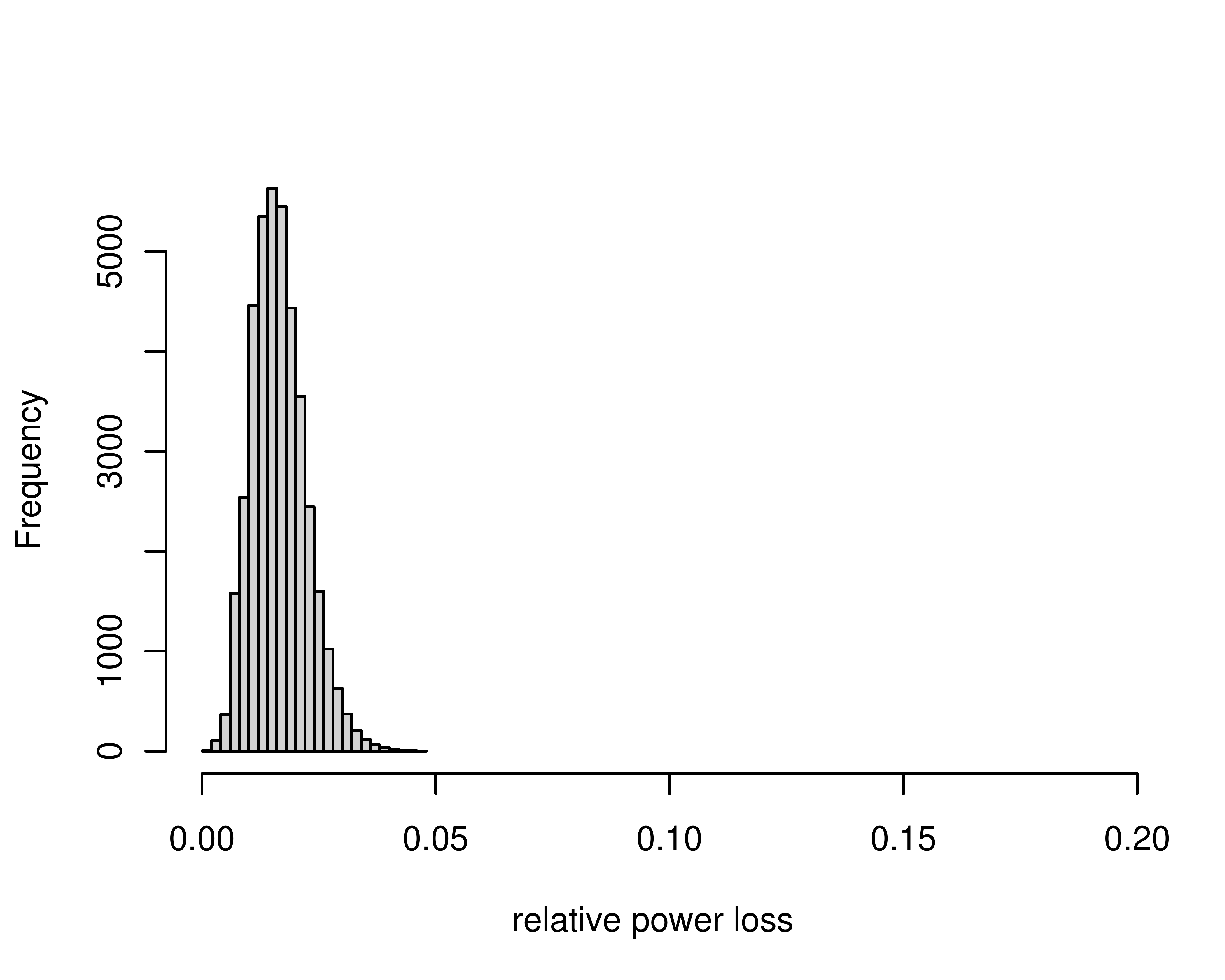}  & \includegraphics[width=2.5in]{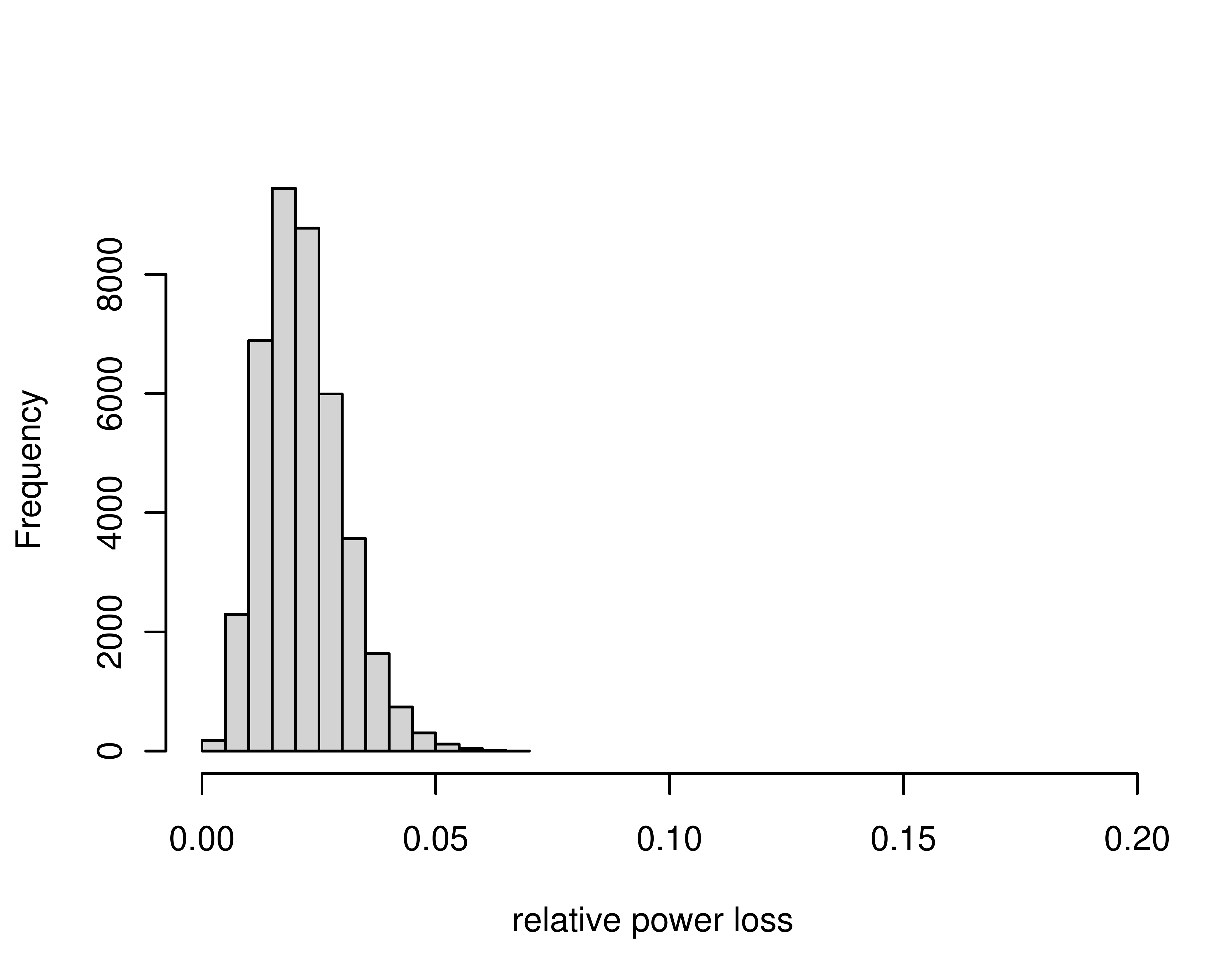}  & \includegraphics[width=2.5in]{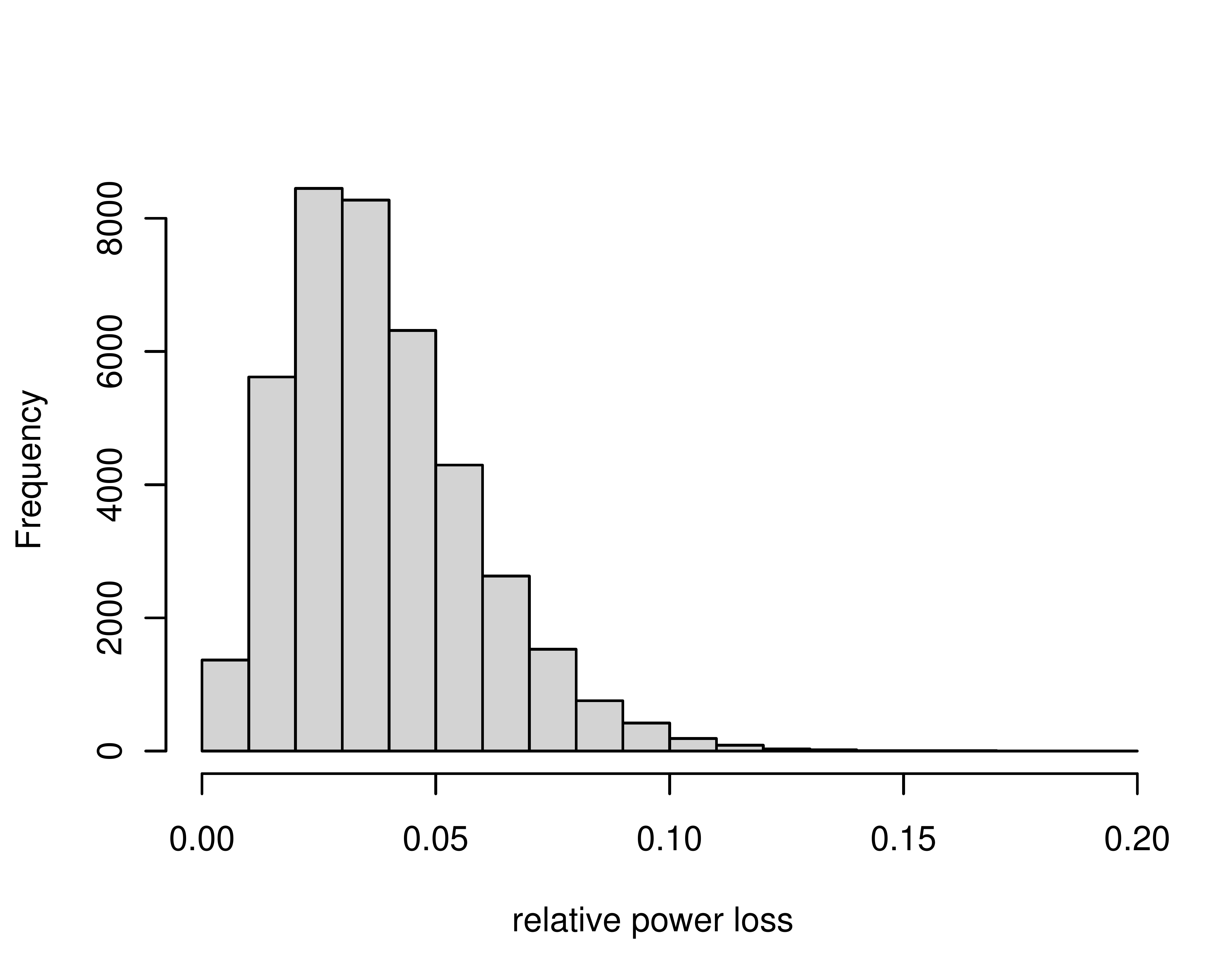} \tabularnewline
\end{tabular}
\caption{\textbf{Varying $\pi_0$ in simulated sets of spectrum-ID.} Similar to \supfig~\ref{supfig:vary_m} only here we decrease $\pi_0$
keeping $\alp=0.05$, $\gam=0.05$, and $m=$2K.  TDC's FDP (top row), the relative loss of power when using FDP-SD compared with TDC (middle row).
\label{supfig:vary_pi0}}
\end{figure}

\clearpage

\bibliography{refs}
\bibliographystyle{plain}

\end{document}